\newcommand{\pref}{\succ}
\newcommand{\p}{{\mathrm{P}}}
\newcommand{\np}{{\mathrm{NP}}}
\newcommand{\fpt}{{\mathrm{FPT}}}
\newcommand{\calA}{\mathcal{A}}
\newcommand{\calD}{\mathcal{D}}
\newcommand{\calL}{\mathcal{L}}
\newcommand{\discrete}{{{\mathrm{disc}}}}
\newcommand{\swap}{{{\mathrm{swap}}}}
\newcommand{\spearman}{{{\mathrm{Spear}}}}
\newcommand{\cost}{{w}}
\newcommand{\pos}{{{{\mathrm{pos}}}}}
\DeclareMathOperator*{\opt}{\textsc{OPT}}
\DeclareMathOperator*{\sol}{\textsc{SOL}}
\DeclareMathOperator*{\poly}{poly}
\newcommand{\mybrace}[2]{\overbrace{\rule{0cm}{0.5cm}#1}^{#2}}
\newcommand{\id}[1]{{#1\hbox{-}\mathrm{ID}}}
\def\2vec#1#2{\left(\begin{array}{c}{#1}\\{#2}\end{array}\right)}
\newtheorem{definition}{Definition}
\newtheorem{observation}{Observation}
\newtheorem{theorem}{Theorem}
\newtheorem{proposition}[theorem]{Proposition}
\newtheorem{corollary}{Corollary}
\newtheorem{example}{Example}
\newcommand{\cutforAAAI}[1]{}
\begin{document}

\title{How Similar Are Two Elections?\thanks{Early version of this 
paper was presented at AAAI-2019~\cite{FaliszewskiSSST19}. This version includes additional results and revised discussions, taking into account follow-up works and the progress since that time.}}
\author{Piotr Faliszewski \\
  AGH University\\
  Krak\'{o}w, Poland 
\and
  Piotr Skowron\\
  University of Warsaw \\
  Warsaw, Poland
\and
  Arkadii Slinko\\
  University of Auckland\\
  Auckland, New Zealand
\and
  Krzysztof Sornat\\
  AGH University\\
  Krak\'{o}w, Poland 
\and
  Stanis{\l}aw Szufa\\
  AGH University\\
  Krak\'{o}w, Poland
\and
  Nimrod Talmon\\
  Ben-Gurion University\\
  Be'er Sheva, Israel
}
\date{}

\maketitle

\begin{abstract}
  We introduce and study isomorphic distances between ordinal
  elections (with the same numbers of candidates and voters). The main
  feature of these distances is that they are invariant to renaming
  the candidates and voters, and two elections are at distance zero if
  and only if they are isomorphic. Specifically, we consider
  isomorphic extensions of distances between preference orders: Given
  such a distance $d$, we extend it to distance $\id{d}$ between
  elections by unifying candidate names and finding a matching between
  the votes, so that the sum of the $d$-distances between the matched
  votes is as small as possible.
  
  We show that testing isomorphism of two elections can be done in
  polynomial time so, in principle, such distances can be tractable.
  Yet, we show that two very natural isomorphic distances are
  $\np$-complete and hard to approximate. We attempt to rectify the
  situation by showing $\fpt$ algorithms for several natural
  parameterizations.

\end{abstract}

\section{Introduction}

We consider the ordinal model of elections, where each voter submits a
preference order that ranks the candidates from the most to the least
desirable one.  Given two such elections of equal size---i.e., with
the same numbers of candidates and the same numbers of voters, albeit
where the names of the candidates and voters may differ---we want to
know how structurally similar they are.

To this end, we design distances that are invariant to renaming the
candidates and voters, and which ensure that two elections are at
distance zero if and only if they are isomorphic.  We study the
complexity of computing several such distances and seek ways of
circumventing their intractability.

Our starting point is the distance rationalizability
framework~\cite{nit:j:closeness,mes-nur:b:distance-realizability,elk-fal-sli:j:dr,elk-fal-sli:j:distance-rational,elk-sli:b:rationalization},
which also considers distances over elections, but which assumes
identical candidate and voter sets (for example, one can think of
elections with voters' preferences at different points of time).

One approach from this framework is to take some metric $d$ over preference orders (such as
the swap distance, which counts the number of inversions) and extend
it to elections by summing the distances between the orders submitted
by each voter in both elections.

We adapt this idea, taking into account that the candidate and voter
sets of the input elections may be different (albeit, we crucially
require equal numbers of candidates and equal numbers of voters):

We first rename the candidates in one of the elections to be the same
as in the other one, then we match each voter from one election to a
distinct voter in the other one, and finally we sum up the distances
between the preference orders of the matched voters.  Importantly, we
rename the candidates and match the voters in such a way as to
minimize the final outcome. We refer to the thus-defined metric as the
$d$-\textsc{Isomorphic Distance} ($\id{d}$ for short). Such distances
indeed are invariant to renaming the candidates and voters, and ensure
that elections are at distance zero exactly if they are isomorphic.

Yet, it is natural to worry about the complexity of computing such
distances because, irrespective of the choice of $d$, being able to
compute $\id{d}$ implies the ability to decide if two elections are
isomorphic.  Fortunately, even though the complexity of testing
isomorphism of many mathematical objects is elusive (where the case of
\textsc{Graph Isomorphism} is by far the most famous example; see, e.g.,
the report of Babai et al.~\cite{bab-daw-sch-tor:j:graphi-isomorphism}
and further discussion on Babai's home page (2017)), ordinal elections are so structured that for them
testing isomorphism is easy.

Sadly, this is where most good news end.  For example, if we take $d$
to be the swap distance, then $d$-\textsc{Isomorphic Distance}
generalizes the {\sc Kemeny Score} problem, which is known to
be computationally
hard~\cite{bar-tov-tri:j:who-won,dwo-kum-nao-siv:c:rank-aggregation}. Thus
we consider several approaches to circumventing the hardness of
computing our distances.

\subsection{Motivation}
Election isomorphism and isomorphic distances seem to be both
fundamental for understanding elections, and quite useful.

For example, we use the isomorphism idea 
to pinpoint an important difference between the single-peaked and the
single-crossing domains (see Section~\ref{sec:ei}).

Further, our isomorphic distances allow for a principled way of
performing classification, clustering, and other common tasks related
to distances, over the space of elections.

Below, we give three more specific examples of applications:

\begin{description}
\item[Map of Elections.]  Following up on our work, Szufa et
  al.~\cite{szu-fal-sko-sli-tal:c:map} and Boehmer et
  al.~\cite{boe-bre-fal-nie-szu:c:compass} introduced the
  map-of-elections framework. The idea is to collect a dataset of
  elections from as many different sources as possible (hence, these
  elections regard different candidates and different voters), compute
  distances between each pair using some metric, and then represent
  these elections as points on a 2D plane, so that the Euclidean
  distances would be as similar as possible to those given by the
  metric. This 2D representation is the \emph{map}. Szufa et
  al.~\cite{szu-fal-sko-sli-tal:c:map} and Boehmer et
  al.~\cite{boe-bre-fal-nie-szu:c:compass} have shown a number of ways
  in which such maps can be helpful in designing computational
  experiments and visualizing their results.  While Szufa et
  al.~\cite{szu-fal-sko-sli-tal:c:map} and Boehmer et
  al.~\cite{boe-bre-fal-nie-szu:c:compass} did not use our distances
  due to their high computational complexity (they designed simpler
  ones), maps based on isomorphic distances also exist and have
  interesting
  applications~\cite{boe-fal-nie-szu-was:c:metrics,fal-kac-sor-szu-was:c:microscope}.
  Indeed, whenever they can be computed, they are viewed as superior.

\item[Analyzing Real-Life Elections.] Given an election (e.g., from
  PrefLib~\cite{mat-wal:c:preflib}, a library of real-life elections), it is interesting to ask if it is similar to those that
  come from some distribution (such as, e.g., the impartial culture
  model, where each preference order is equally likely; 

  or one of the Euclidean models, where candidates and voters are
  mapped to points in some Euclidean space, and the voters rank the
  candidates with respect to the distance from their
  points~\cite{enelow1984spatial,enelow1990advances}). To this end, we
  may generate a number of elections according to a given distribution
  and compute their isomorphic distances from the given one.  This
  approach was taken, e.g., by Elkind et
  al.~\cite{boe-bre-elk-fal-szu:c:frequency-matrices}, albeit using
  the relaxed distances of Szufa et
  al.~\cite{szu-fal-sko-sli-tal:c:map} and Boehmer et
  al.~\cite{boe-bre-fal-nie-szu:c:compass}.

\item[Comparing Election Distributions.]  Given several samples from
  two distributions over elections, we may ask how similar these
  distributions are to each other. For example, we may wonder how
  different are two models of elections with Euclidean preferences
  where in one model voters and candidates are distributed uniformly
  on a disc and in another they are distributed

  according to a Gaussian distribution.\footnote{Such a comparison
    would be useful, e.g., in the work of Elkind et
    al.~\cite{elk-fal-las-sko-sli-tal:c:2d-multiwinner}, where
    the authors evaluated multiwinner voting rules on four Euclidean
    models of elections.} While comparing real-valued distributions is
  a classic topic within statistics, doing the same for elections
  seems much harder, and we are not aware of any good solution. Yet, intuitively, the ability to measure the isomorphic distance
  between samples of elections would be helpful in designing an
  appropriate method (indeed, e.g., the classic $\chi^2$ test relies on
  comparing frequencies of \emph{similar} items in both distributions).

\end{description}

We stress that we do not address the above problems directly, but,
rather, we claim that isomorphic distances are
important tools for tackling them. 
Our focus is on establishing
the complexity of computing our distances.

\subsection{Our Contribution}
Our main contributions are as
follows:

\begin{enumerate}
\item We show that it is possible to test election isomorphism in
  polynomial time and, using similar ideas, we show a polynomial-time
  algorithm for the \textsc{Isomorphic Distance} problem based on the
  discrete distance (but we do not expect this variant of the problem
  to be very relevant in practice).

\item We show that computing isomorphic distances based on the swap
  and Spearman metrics is $\np$-complete (see Section~\ref{sec:prelim}
  for definitions). In the former case, we inherit hardness from the
  {\sc Kemeny Score} problem, but in the latter we use a new proof. We strengthen
  these results by showing inapproximability results, but we also
  provide parametrized algorithms for both problems.

\end{enumerate}

\subsection{Related Work}

Our work was inspired by the distance rationalizability framework,
which offers a unified langauge for defining voting rules. The idea is
to consider the space of all possible elections over a given set of
candidates and with a given set of voters (who may have different
preference orders in different elections), label some of the elections
with ``obviously winning'' candidates (e.g., an election where all
voters rank a given candidate $c$ on the top position could be labeled
with $c$), take a distance $d$ over elections and define a voting rule
as follows: Candidate $c$ is a (possibly tied) winner of election $E$
if among all the labeled elections that are at the smallest
$d$-distance from $E$, there is one labeled with $c$.  By varying the
labeling of the space and the distance used, one can recover many
well-known rules, as well as rapidly develop new ones with particular
properties.

For an overview
of the distance rationalizability framework, we point the reader to the chapter of Elkind and
Slinko~\cite{elk-sli:b:rationalization}. Early work regarding the
framework is due to Nitzan~\cite{nit:j:closeness} and Meskanen and
Nurmi~\cite{mes-nur:b:distance-realizability}, whereas our isomorphic
distances are related to the votewise distances studied by Elkind,
Faliszewski, and Slinko~\cite{elk-fal-sli:j:dr}.

The idea of isomorphic distances, first presented in the conference
version of this paper in 2019, lead to the development of the
map-of-elections
framework~\cite{szu-fal-sko-sli-tal:c:map,boe-bre-fal-nie-szu:c:compass}.
In particular, in the few years between the publication of the
conference version of this paper and now, quite a number of papers
studying the map-of-elections framework were published (often
coauthored by some of the authors of this work), many of which were
either inspired by our work, or followed-up on it. Instead of listing
these papers here, we mention them at relevant points throughout the paper.

Our work was also extended by Redko et
al.~\cite{red-vay-fla-cou:c:co-optimal-transport}, who proposed an
approach to measuring distances that is not restricted to elections of
the same sizes.

Similarly, following up on the idea of comparing elections of
different sizes, Faliszewski et
al.~\cite{fal-sor-szu:c:subelection-isomorphism} considered the
complexity of testing if a given election is isomorphic to a
subelection of another one, and of deciding if two elections have
isomorphic subelections of a given size.

Finally, our work was inspired by the studies of the \textsc{Graph
  Isomorphism} problem (see, e.g., the report of Babai et
al.~\cite{bab-daw-sch-tor:j:graphi-isomorphism}).  In particular, our
isomorphic distances are related to the notion of approximate graph
isomorphism, introduced by Arvind et
al.~\cite{arv-koe-kur-vas:c:approximate-graph-isomorphism} and then
studied by Grohe et
al.~\cite{gro-rat-woe:c:approximate-graph-isomorphism}.  Approximate
graph isomorphism is very similar to our problems in spirit, but is
quite different on the technical level.

\section{Preliminaries}\label{sec:prelim}

For an integer $n$, we write $[n]$ to denote the set
$\{1, \ldots, n\}$.  For two sets $A$ and $B$ of the same cardinality,
by $\Pi(A,B)$ we mean the set of all one-to-one mappings from
$A$ to $B$, and  we write $S_n$ as a shorthand for $\Pi([n],[n])$.  We
assume familiarity with standard notions from computational complexity theory~\cite{pap:b:complexity}, parameterized
algorithmics~\cite{nie:b:invitation-fpt,cyg-fom-kow-lok-mar-pil-pil-sau:b:fpt},
and regarding approximation algorithms~\cite{vaz:b:approximation}.

Let $C$ be a (finite, nonempty) set of candidates. We refer to linear
orders over $C$ as \emph{preference orders} (or, \emph{votes}),
ranking the candidates from the most to the least appealing one.  We
write $\calL(C)$ to denote the set of all preference orders over $C$.
Every subset $\calD$ of $\calL(C)$ is called a \emph{domain} (of
preference orders over $C$) and, in particular, $\calL(C)$ itself is
the \emph{general domain}.  Later we will consider the single-peaked
and single-crossing domains.  

For
a vote $v$, we write $v \colon a \pref b$ to indicate that $v$ ranks
candidate $a$ higher than~$b$ (i.e., $v$ prefers~$a$ to~$b$). By
$\pos_v(c)$ we mean the position of candidate~$c$ in~$v$ (the
top-ranked candidate has position $1$, the next one has position $2$,
and so on).

An election $E = (C,V)$ consists of a set of candidates
$C = \{c_1,\ldots, c_m\}$ and a collection of voters
$V = (v_1, \ldots, v_n)$, where each voter $v_i$ has a preference
order, also denoted as $v_i$ (the exact meaning will always be clear
from the context and this convention will simplify and streamline our
discussions).  The preference orders 
always come from some domain $\calD$ (we use the general domain, unless stated
otherwise).

For a set $X$, a function $d \colon X \times X \rightarrow \mathbb{R}$
is a \emph{metric} if for each $x, y, z \in X$ it holds that
(i)~$d(x,y) \geq 0$, (ii)~$d(x,y) = 0$ if and only if $x = y$,
(iii)~$d(x,y) = d(y,x)$, and (iv) $d(x,z) \leq d(x,y) +
d(y,z)$. A~\emph{pseudometric} relaxes condition (ii) to the
requirement that $d(x,x) = 0$ for each $x \in X$. In particular, for a
pseudometric $d$ it is possible that $d(x,y) = 0$ when $x \neq y$.  We
use the terms pseudometric and distance interchangeably (however, we
also often use the term ``distance'' to refer to a value of a given
pseudometric between some two objects; we make sure that the meaning
is always clear).

We focus on the following three metrics between preference orders
(below, let $C$ be a set of candidates and let $u$ and $v$ be two
preference orders from $\calL(C)$):

\begin{description}
\item[Discrete Distance.]  The discrete distance between $u$ and $v$,
  $d_\discrete(u,v)$, is $0$ when $u$ and $v$ are identical and it
  is~$1$ otherwise.

\item[Swap Distance.] The swap distance between $u$ and $v$ (also
  known as the Kendall's Tau distance in statistics), denoted
  $d_\swap(u,v)$, is the smallest number of swaps of consecutive
  candidates that need to be performed within $u$ to transform it
  into~$v$.

\item[Spearman Distance.] The Spearman's distance (also known as the
  Spearman's footrule or the displacement distance) measures the total
  displacement of candidates in $u$ relative to their positions in
  $v$. Formally, we have:
\[
   d_\spearman(u,v) = \sum_{c\in C} |\text{pos}_v(c)-\text{pos}_u(c)|.
\]
\end{description}

We only consider such distances over preference orders that are
defined for all sets of candidates (as is the case for $d_\discrete$,
$d_\swap$, and $d_\spearman$). For a discussion of various other
distances between preference orders (also viewed as permutations), we
point to the overview of Deza and Deza~\cite{dez-dez:b:encyclopedia}.

Consider two sets of candidates, $C$ and $D$, of the same cardinality.
Let $\sigma$ be a bijection from $C$ to $D$.  We extend $\sigma$ to
act on preference orders $v \in \mathcal{L}(C)$ in a natural way,
so that $\sigma(v)\in \mathcal{L}(D)$ is the preference order where for each $c, c' \in C$, it holds that
$v \colon c \pref c' \iff \sigma(v) \colon \sigma(c) \pref
\sigma(c')$.

For an election $E = (C,V)$, where $V = (v_1, \ldots, v_n)$, a
candidate set $D$, and a bijection $\sigma$ from $C$ to $D$, by
$\sigma(E)$ we mean the election with candidate set $D$ and voter
collection $(\sigma(v_1), \ldots, \sigma(v_n))$.

\section{Election Isomorphism and Isomorphic Distances}\label{sec:ei}
In this section we define the notion of election isomorphism,
illustrate its usefulness, and introduce isomorphic distances.

Two elections are isomorphic if they are identical up to renaming the
candidates and reordering the voters. Formally, we have the following
definition.

\begin{definition}
  We say that elections $E=(C,V)$ and $E'=(C',V')$, where $|C| = |C'|$,
  $V = (v_1, \ldots, v_n)$, and $V' = (v'_1, \ldots, v'_n)$, are
  isomorphic if there is a bijection $\sigma\colon C \to C'$ and a
  permutation $\nu\in S_n$ such that $\sigma(v_i)=v'_{\nu(i)}$ for all
  $i\in [n]$.
\end{definition}

\begin{example}
  Consider elections $E = (C,V)$ and $E' = (C',V')$, such that
  $C = \{a,b,c\}$, $C' = \{x,y,z\}$, $V = (v_1,v_2,v_3)$,
  $V' = (v'_1, v'_2,v'_3)$, with preference orders:
\begin{align*}
 &  v_1 \colon a \pref b \pref c,\quad &&   v_2 \colon b \pref a \pref c, \quad && v_3 \colon c \pref a \pref b,\\
 &  v'_1 \colon y \pref x \pref z, \quad &&  v'_2 \colon x \pref y \pref z, \quad && v'_3 \colon z \pref x \pref y.
\end{align*}
$E$ and $E'$ are isomorphic, by mapping candidates $a$ to $x$, $b$ to
$y$, and $c$ to $z$, and voters $v_1$ to $v'_2$, $v_2$ to $v'_1$, and
$v_3$ to $v'_3$.
\end{example}
The idea of election isomorphism is quite natural and has already
appeared in the literature, though without using this name and usually
as a tool to achieve some specific goal.  For example,
E\u{g}ecio\u{g}lu and Giritligil~\cite{ege-gir:j:isomorphism-ianc}
refer to two isomorphic elections as members of the same
\emph{anonymous and neutral equivalence class (ANEC)} and study the
problem of sampling representatives of ANECs uniformly at random.
Hashemi and Endriss~\cite{has-end:c:diversity-indices} use the
election isomorphism idea in their analysis of preference diversity
indices.

While for many types of mathematical objects it is not clear if it is
possible to test their isomorphism (as witnessed by the famous
\textsc{Graph Isomorphism} problem; see, e.g., the report of Babai et
al.~\cite{bab-daw-sch-tor:j:graphi-isomorphism}), for elections this
task is surprisingly easy. Indeed, it suffices to consider
polynomially many bijections between the candidates, and for each
check if the elections become identical (after normalization).  Since
we will be interested in distances under which exactly the isomorphic
elections are at distance zero, this result indicates that such
distances are not doomed to be intractable.

\begin{proposition}\label{thm:ei-complexity}
  There is a polynomial-time algorithm that given two elections
  decides if they are isomorphic.
\end{proposition}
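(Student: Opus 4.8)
The plan is to exploit the fact that a single vote already pins down the entire candidate bijection. Write $E = (C,V)$ and $E' = (C',V')$ with $V = (v_1,\dots,v_n)$, $V' = (v'_1,\dots,v'_n)$, and $|C| = |C'| = m$; we may assume $n \geq 1$ and $m \geq 1$, as the remaining cases (empty candidate set, or no voters) are settled immediately by comparing the sizes. The key observation is this: if $E$ and $E'$ are isomorphic via a bijection $\sigma \colon C \to C'$ and a permutation $\nu \in S_n$, then $\sigma(v_1) = v'_{\nu(1)}$, and moreover $\sigma$ is \emph{completely determined} by the ordered pair $(v_1, v'_{\nu(1)})$: since $v_1$ ranks all $m$ candidates, $\sigma$ must send the candidate ranked $k$-th in $v_1$ to the candidate ranked $k$-th in $v'_{\nu(1)}$, for every $k \in [m]$.

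Hence the algorithm is: for each $j \in [n]$, let $\sigma_j \colon C \to C'$ be the unique bijection that, for each $k \in [m]$, maps the candidate in position $k$ of $v_1$ to the candidate in position $k$ of $v'_j$. Compute the election $\sigma_j(E)$ and test whether its collection of votes, viewed as a multiset, equals the multiset of votes of $E'$ --- for instance by sorting both lists of preference orders lexicographically and comparing them termwise. If this test succeeds for some $j$, output ``isomorphic''; otherwise output ``not isomorphic''.

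For correctness, first suppose the test succeeds for some $j$: then the multiset of votes of $\sigma_j(E)$ equals that of $E'$, so there is a permutation $\nu \in S_n$ with $\sigma_j(v_i) = v'_{\nu(i)}$ for all $i$, which is exactly an isomorphism witnessed by $(\sigma_j,\nu)$. Conversely, if $E$ and $E'$ are isomorphic via $(\sigma, \nu)$, then by the key observation $\sigma = \sigma_{\nu(1)}$, so the multisets of votes of $\sigma_{\nu(1)}(E)$ and of $E'$ coincide and the test succeeds for $j = \nu(1)$. For the running time, there are $n$ iterations; in each we build $\sigma_j$ in $O(m)$ time, apply it to all $n$ votes in $O(nm)$ time, and perform the multiset comparison by sorting in $O(nm \log n)$ time, for an overall bound polynomial in $n$ and $m$.

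I do not expect any real obstacle here: the only mild subtlety is that elections permit repeated preference orders, so one must compare \emph{multisets} of votes rather than sets, which the sorting step handles cleanly. The conceptual content is precisely that, unlike in \textsc{Graph Isomorphism}, one vote ``names'' all the candidates, collapsing the $m!$ a priori candidate bijections down to the $n$ relevant candidates $\sigma_1,\dots,\sigma_n$.
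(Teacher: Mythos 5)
Your proposal is correct and follows essentially the same approach as the paper: fix $v_1$, observe that matching it to each of the $n$ votes of $E'$ determines the candidate bijection $\sigma_j$ completely, and then check equality of the resulting vote multisets by lexicographic sorting. Your write-up is, if anything, slightly more explicit about the multiset comparison and the running time, but the argument is identical in substance.
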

\begin{proof}
  Let $E = (C,V)$ and $E' = (C',V')$ be two input elections, where
  $C = \{c_1, \ldots, c_m\}$, $C' = \{c'_1, \ldots, c'_m\}$,
  $V = (v_1, \ldots, v_n)$ and $V = (v'_1, \ldots, v'_n)$.  Without
  loss of generality, let us assume that $v_1$'s preference order is
  $v_1 \colon c_1 \pref c_2 \pref \cdots \pref c_m.$ For each $v'_j$
  there is a bijection $\sigma_j$ such that $\sigma_j(v_1) =
  v'_j$. For each $\sigma_j$, we sort the votes in elections
  $\sigma_j(E)$ and $E'$ lexicographically (by assuming some arbitrary
  order over $C'$) and check if they are identical. If so, we accept.
  If we do not accept for any $\sigma_j$, then we reject.

  The algorithm runs in polynomial time because there are $n$
  $\sigma_j$'s to try.  Correctness follows from the fact that we need
  to map $v_1$ to some vote in $E'$ and we try all possibilities.
\end{proof}

In a follow-up work, Faliszewski et
al.~\cite{fal-sor-szu:c:subelection-isomorphism} have shown that the
problems of testing if two elections have isomorphic subelections of a
given size, or if a given election is isomorphic to a subelection of a
given, larger one, are $\np$-hard.

\subsection{Isomorphism of Preference Domains}

As an extended example of usefulness of the isomorphism idea,

we consider the single-peaked~\cite{bla:b:polsci:committees-elections}
and single-crossing~\cite{mir:j:single-crossing,rob:j:tax} domains.
For detailed discussions of both of them, see the surveys of Elkind et
al.~\cite{elk-lac-pet:t:domains,elk-lac-pet:c:restricted-domains}.

\begin{definition}
  Let $\calD \subseteq \calL(C)$ be a domain.
  \begin{enumerate}
  \item $\calD$ is single-peaked if there exists a linear order $>$
    over $C$ (referred to as the societal axis) such that for each $v$
    and each $j \in [|C|]$, the top $j$ candidates from $v$ form an
    interval according to the order $>$ (i.e., for each three
    candidates $a, b, c$, such that we have $a > b > c$, if $v$ ranks
    $a$ and $c$ among top $j$ candidates, then $v$ also ranks $b$
    among top $j$ candidates).

  \item $\calD$ is single-crossing if its preference orders can be
    ordered as $v_1, \ldots, v_n$ so that for each pair of candidates
    $a, b \in C$, as we consider $v_1, \ldots, v_n$ in this order, the
    relative ranking of~$a$ and~$b$ changes at most once.
  \end{enumerate}
\end{definition}

A single-peaked (single-crossing) domain is maximal if it is not
contained in any other single-peaked (single-crossing) domain. Each
maximal single-peaked domain $\calD \subseteq \calL(C)$ contains
$2^{|C|-1}$ preference orders (Monjardet~\cite{mon:survey}
attributes this fact to a 1962 work of Kreweras).

Since we can view a domain as an election that includes a single copy
of each preference order from the domain, our notion of isomorphism
directly translates to the case of domains, and we can formalize a
fundamental difference between the single-peaked and single-crossing
domains.

\begin{proposition}
  Each two maximal single-peaked domains over candidate sets of the
  same size are isomorphic. There are two maximal single-crossing
  domains over the same set of candidates that are not isomorphic.
\end{proposition}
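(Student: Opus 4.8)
The plan is to prove the two halves separately.

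\medskip

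\textbf{Half 1: All maximal single-peaked domains over $C$ with $|C| = m$ are isomorphic.} The idea is that the societal axis is the only ``parameter'' of a maximal single-peaked domain, and renaming candidates lets us align any two axes. Concretely, let $\calD_1 \subseteq \calL(C)$ be maximal single-peaked with respect to axis $>_1$, and let $\calD_2 \subseteq \calL(C')$ be maximal single-peaked with respect to axis $>_2$, where $|C| = |C'| = m$. Define $\sigma \colon C \to C'$ to be the unique bijection that carries the $i$-th candidate of $>_1$ to the $i$-th candidate of $>_2$, for each $i \in [m]$. First I would check that $\sigma$ maps the single-peaked condition for $>_1$ to the single-peaked condition for $>_2$: a vote $v$ has all its top-$j$ sets forming intervals of $>_1$ if and only if $\sigma(v)$ has all its top-$j$ sets forming intervals of $>_2$, since $\sigma$ is an order isomorphism from $(C,>_1)$ to $(C',>_2)$ and it preserves the ``top $j$'' relation of each vote by definition of $\sigma(v)$. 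Hence $\sigma(\calD_1)$ is a single-peaked domain over $C'$ with respect to $>_2$. Since $\calD_1$ is maximal, it has $2^{m-1}$ votes (the stated Kreweras fact), so $\sigma(\calD_1)$ also has $2^{m-1}$ votes and is single-peaked for $>_2$; as $\calD_2$ is the \emph{unique} maximal single-peaked domain for $>_2$ (it also has exactly $2^{m-1}$ elements and contains every single-peaked-for-$>_2$ vote), we get $\sigma(\calD_1) = \calD_2$. Viewing each domain as an election with one copy of each vote, $\sigma$ together with the induced matching of votes witnesses the isomorphism. The one point needing slight care is the claim that a maximal single-peaked domain for a fixed axis is unique --- I would justify this by noting that the set of \emph{all} votes single-peaked with respect to $>_2$ is itself a single-peaked domain of size $2^{m-1}$, so by maximality and a counting argument every maximal single-peaked domain for $>_2$ equals this canonical one.

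\medskip

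\textbf{Half 2: Two non-isomorphic maximal single-crossing domains over the same $C$.} Here the strategy is to find an \emph{invariant} of single-crossing domains that is preserved by isomorphism but differs between two explicit examples. A natural invariant is the multiset of votes' ``distances from the ends'' of the single-crossing order, or more simply the number of distinct preference orders in the domain, or the structure of the first/last vote. I would take $m$ reasonably small (say $m = 4$ or $m = 5$) and exhibit two maximal single-crossing orders $v_1, \dots, v_n$ and $w_1, \dots, w_{n'}$ over $C$ that have \emph{different numbers of votes}: since isomorphic elections have the same number of voters, differing cardinalities immediately preclude isomorphism. (Unlike the single-peaked case, maximal single-crossing domains need not all have the same size --- this is exactly the asymmetry the proposition highlights.) The main work is to pick the two societal orderings of votes, verify each is single-crossing (each pair of candidates flips at most once along the sequence), verify each is maximal (no vote can be inserted while keeping the single-crossing property), and compute that the two sizes differ.

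\medskip

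\textbf{Main obstacle.} The first half is essentially bookkeeping once the uniqueness-of-maximal-domain-per-axis point is pinned down. The real effort is in the second half: constructing two concrete maximal single-crossing domains over a common candidate set and \emph{proving maximality} of each, since maximality requires arguing that \emph{no} additional vote can be inserted anywhere in the sequence --- a finite but fiddly case check. I would keep $m$ as small as possible (likely $m=4$, choosing the two orderings so that one yields the full ``sorted by swap distance from a fixed vote'' chain of length $\binom{m}{2}+1 = 7$ and the other a shorter maximal chain of different length) to make this verification short.
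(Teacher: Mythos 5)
Your first half is correct and is essentially the paper's argument: align the two societal axes by the order-isomorphism between them and use the fact that, for a fixed axis, the maximal single-peaked domain is unique (namely the set of all votes single-peaked with respect to that axis, of cardinality $2^{m-1}$).

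The second half has a genuine gap: the invariant you plan to use, cardinality, cannot distinguish any two maximal single-crossing domains, because every maximal single-crossing domain over $m$ candidates has exactly $\binom{m}{2}+1$ elements. Briefly: in a maximal domain, consecutive votes of the single-crossing order must differ by a single adjacent swap (otherwise one could insert an intermediate vote on a swap-geodesic between them, which preserves the single-crossing property), so the set of pairs that have crossed by step $i$ is exactly the set of inversions of $v_i$ relative to $v_1$; and if $v_n$ is not the full reverse of $v_1$, some \emph{adjacent} pair of $v_n$ is still uncrossed (if every adjacent pair of $v_n$ were inverted in $v_1$, transitivity would force $v_1$ to be the reverse of $v_n$), so swapping that pair appends a new vote and contradicts maximality. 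Hence $n-1=\binom{m}{2}$ always, the "two maximal chains of different lengths" you hope to exhibit do not exist, and the asymmetry with the single-peaked case is not one of size. The paper instead writes down two explicit seven-element maximal single-crossing domains over four candidates and separates them by a finer invariant that is preserved under candidate renaming and vote reordering: the number of distinct candidates appearing in the top position (three in one domain versus two in the other). Your second half needs an invariant of that kind, together with two concrete maximal domains on which it differs.
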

\begin{proof}[Proof]
  For the first part of the proposition, it suffices to note that if
  $\calD$ and $\calD'$ are two maximal single-peaked domains (over
  candidate sets $\{x_1, \ldots, x_m\}$ and $\{y_1, \ldots, y_m\}$,
  respectively), with axes $>_1$ and $>_2$, such that: 
  \begin{align*}
   x_1 >_1 \cdots >_1 x_m && \text{and} && y_1 >_2 \cdots >_2 y_m,
  \end{align*}
  then a bijection that maps each $x_i$ to $y_i$ witnesses that the
  two domains are isomorphic.
  
  For the second part of the proposition, consider the following two
single-crossing domains of four candidates (each preference
  order is shown as a column, with the first candidate on top and the
  last one on the bottom):
$$
\begin{array}{ccccccc}
  a&b&b&b&b&d&d\\
  b&a&c&c&d&b&c\\
  c&c&a&d&c&c&b\\
  d&d&d&a&a&a&a
\end{array}
\qquad
\text{and}
\qquad
\begin{array}{ccccccc}
  d&d&d&a&a&a&a\\
  c&c&a&d&c&c&b\\
   b&a&c&c&d&b&c\\
   a&b&b&b&b&d&d   
\end{array}
$$
Both are maximal single-crossing domains and both are maximal Condorcet
domains~\cite{puppe2017condorcet}. 

They are not isomorphic as the first one has three different
candidates on top, and the second one has two.

\end{proof}

Given this result, it is very natural to ask, e.g., how many
nonisomorphic maximal single-crossing domains exist for a particular
number of candidates~$m$. Another interesting issue regards sampling
single-crossing domains uniformly at random (this task seems to be
nontrivial even if we do not require the domains to be nonisomorphic;
see the work of Szufa et al.~\cite{szu-fal-sko-sli-tal:c:map} for a
scenario where such sampling would be useful). We recommend both these
issues for future research.

\subsection{Isomorphic Distances}

Our next goal is to build distances that are invariant to renaming the
candidates and reordering the voters, and which ensure that two
elections are at distance zero exactly if they are isomorphic\footnote{In the
language of social choice theory, invariance to renaming the
candidates would be called \emph{neutrality} and invariance to reordering
the voters would be called \emph{anonymity}.}

(a weaker condition would be to require that isomorphic elections are
at distance zero, but distance zero does not need to imply
isomorphism; such distances also turned out to be
useful, e.g., in the map-of-elections framework~\cite{szu-fal-sko-sli-tal:c:map,boe-fal-nie-szu-was:c:metrics},
but we do not consider them in this paper).

Below we show how to extend distances over preference orders to the
type of distances that we are interested in.

\begin{definition}

  Let $d$ be a distance between preference orders.

  Let $E = (C,V)$ and $E' = (C',V')$ be two elections, where
  $|C| = |C'|$, $V = (v_1, \ldots, v_n)$ and
  $V' = (v'_1, \ldots, v'_n)$. We define the $d$-isomorphic distance
  between $E$ and $E'$ to be:

  \begin{equation*} 
    \id{d}(E,E') = \min_{\nu \in S_n}\min_{\sigma \in \Pi(C,C')}\sum_{i=1}^n  d(\sigma(v_i),v'_{\nu(i)})
  \end{equation*}
\end{definition}

We sometimes refer to the bijection $\sigma$ as the candidate matching
and to the permutation $\nu$ as the voter matching, and sometimes
instead of $\nu$, we use bijection $\tau \in \Pi(V,V')$, depending on
what is more convenient. Further, we will sometimes refer to $d_\swap$-ID
and $d_\spearman$-ID as the swap and Spearman isomorphic distances,
respectively.

Formally, the $d$-isomorphic distances are pseudometrics over the
space of elections with given numbers of candidates and
voters. Indeed, it is immediate to see that for every election
$E = (C,V)$ we have that $\id{d}(E,E) = 0$, and for each two elections
$E$ and $E'$ we have $\id{d}(E,E') = \id{d}(E',E)$. The triangle
inequality requires some care: We say that a distance $d$ over
preference orders is \emph{permutation-invariant} if for each two
candidate sets $C$ and $D$ of the same cardinality, each two
preference orders $u, v \in \calL(C)$, and each bijection $\sigma$
from $C$ to $D$, it holds that
$d(u,v) = d(\sigma(u),\sigma(v))$.

\begin{proposition}\label{prop:triangle}

  For each permutation-invariant distance $d$ over preference orders,
  $\id{d}$ satisfies the triangle inequality.

\end{proposition}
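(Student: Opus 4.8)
The plan is to verify the triangle inequality
$\id{d}(E,E'') \le \id{d}(E,E') + \id{d}(E',E'')$
for three elections $E = (C,V)$, $E' = (C',V')$, and $E'' = (C'',V'')$ of the same size, by composing the optimal matchings for the two right-hand-side terms. First I would fix matchings that achieve the minima on the right: let $\sigma \in \Pi(C,C')$ and $\nu \in S_n$ be optimal for $\id{d}(E,E')$, so that $\id{d}(E,E') = \sum_{i=1}^n d(\sigma(v_i), v'_{\nu(i)})$, and let $\rho \in \Pi(C',C'')$ and $\mu \in S_n$ be optimal for $\id{d}(E',E'')$, so that $\id{d}(E',E'') = \sum_{i=1}^n d(\rho(v'_i), v''_{\mu(i)})$. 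The natural candidate matching between $C$ and $C''$ is the composition $\rho \circ \sigma \in \Pi(C,C'')$, and the natural voter permutation is $\mu \circ \nu \in S_n$.

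Next I would estimate $\id{d}(E,E'')$ using exactly this pair of matchings, which by definition of the minimum can only be an overestimate:
\begin{equation*}
  \id{d}(E,E'') \le \sum_{i=1}^n d\bigl((\rho\circ\sigma)(v_i), v''_{(\mu\circ\nu)(i)}\bigr).
\end{equation*}
For each $i$, I apply the triangle inequality of the underlying distance $d$ on $\calL(C'')$, inserting the intermediate order $\rho(v'_{\nu(i)})$:
\begin{equation*}
  d\bigl(\rho(\sigma(v_i)), v''_{\mu(\nu(i))}\bigr) \le d\bigl(\rho(\sigma(v_i)), \rho(v'_{\nu(i)})\bigr) + d\bigl(\rho(v'_{\nu(i)}), v''_{\mu(\nu(i))}\bigr).
\end{equation*}
Here is where permutation-invariance is essential: the first term equals $d(\sigma(v_i), v'_{\nu(i)})$ because applying the bijection $\rho$ to both arguments does not change the distance. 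Summing over $i \in [n]$, the first terms sum to $\sum_i d(\sigma(v_i), v'_{\nu(i)}) = \id{d}(E,E')$, and the second terms, after reindexing by $j = \nu(i)$ (a permutation of $[n]$), sum to $\sum_j d(\rho(v'_j), v''_{\mu(j)}) = \id{d}(E',E'')$, giving the claim.

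I expect the only subtle point is the bookkeeping around composition and reindexing — making sure $\rho \circ \sigma$ is genuinely a bijection $C \to C''$, that $\mu \circ \nu$ is the correct voter permutation, and that the sum over the second terms telescopes correctly after the substitution $j = \nu(i)$. The role of permutation-invariance must be flagged clearly, since without it the step $d(\rho(\sigma(v_i)), \rho(v'_{\nu(i)})) = d(\sigma(v_i), v'_{\nu(i)})$ fails and the whole argument breaks. Everything else — nonnegativity, symmetry, and $\id{d}(E,E) = 0$ — is routine and was already noted in the text, so I would only restate it briefly, if at all.
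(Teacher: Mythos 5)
Your proposal is correct and follows essentially the same argument as the paper: compose the two optimal candidate bijections and voter permutations, bound $\id{d}(E,E'')$ by the resulting matching, split each term via the triangle inequality of $d$ through the intermediate order $\rho(v'_{\nu(i)})$, and then use permutation-invariance on the first summand and reindexing by $\nu$ on the second. The bookkeeping you flag as the only subtle point is handled identically in the paper's proof, so there is nothing to add.
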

\begin{proof}
  Let $E = (C,V)$, $E' = (C',V')$, and $E'' = (C'',V'')$ be three
  elections such that $|C| = |C'| = |C''|$ and
  $V = (v_1, \ldots, v_n)$, $V' = (v'_1, \ldots, v'_n)$, and
  $V'' = (v''_1, \ldots, v''_n)$. Let $\sigma$ and $\sigma'$ be two
  bijections, from $C$ to $C'$ and from $C'$ to $C''$, respectively,
  and let $\nu$ and $\nu'$ be two permutations from $S_n$ such that:
  \begin{align*}
    \id{d}(E,E') &= \textstyle\sum_{i=1}^n  d(\sigma(v_i),v'_{\nu(i)}), \\  
    \id{d}(E',E'') & = \textstyle\sum_{i=1}^n  d(\sigma'(v'_i),v''_{\nu'(i)}).
  \end{align*}
  By definition of $\id{d}$, the fact that $d$ satisfies the triangle
  inequality, and by permutation invariance of $d$, we have that:
  \begin{align*}
     \id{d}(E,E'')  &\leq  \textstyle\sum_{i=1}^n  d(\sigma'(\sigma(v_i)),v''_{\nu'(\nu(i))}) \\ 
     &\leq \textstyle\sum_{i=1}^n  d(\sigma'(\sigma(v_i)),\sigma'(v'_{\nu(i)})) 
     +    \textstyle\sum_{i=1}^n  d(\sigma'(v'_{\nu(i)}),v''_{\nu'(\nu(i))}) \\
     &=    \textstyle\sum_{i=1}^n  d(\sigma(v_i),v'_{\nu(i)}) 
     +    \textstyle\sum_{i=1}^n  d(\sigma'(v'_i),v''_{\nu'(i)}) \\
     &= \id{d}(E,E') + \id{d}(E',E''),
  \end{align*}

  which gives our desired result.
\end{proof}

The $d$-isomorphic distances inherit some properties from the
underlying distances. For example, the Diaconis--Graham inequality~\cite{dia-gra:j:spearman}
says that for two preference orders $u$ and $v$ we have:
\begin{align}
  d_\swap(u, v) \leq d_\spearman(u, v) \leq 2 \cdot d_\swap(u,v)\label{ineq:diaconis-graham}
\end{align}
and, indeed, the same holds for the isomorphic distances.

\begin{proposition}\label{pro:one}
  Let $d_1$ and $d_2$ be two distances over preference orders for
  which there is some constant $c$, such that for every two preference
  orders $u$ and $v$ (over the same candidate set), it holds that
  $c \cdot d_1(u, v) \geq d_2(u, v)$.  Then, for every two elections
  $E$ and $E'$ (over candidate sets of the same cardinality and with
  the same numbers of voters) we have
  $c \cdot \id{d_1}(E,E') \geq \id{d_2}(E,E')$.
\end{proposition}

\begin{proof}

  Let $E = (C,V)$ and $E' = (C',V')$, where $V = (v_1, \ldots, v_n)$
  and $V' = (v'_1, \ldots, v'_n)$.  Let $\nu \in S_n$ and
  $\sigma \in \Pi(C,C')$ be the permutation of $[n]$ and the bijection
  from $C$ to $C'$ that give the value $\id{d_1}(E,E')$. We have that:
  \begin{align*}
     c \cdot \id{d_1}(E,E') = c \cdot \sum_{i=1}^n d_1(\sigma(v_i),v'_{\nu(i)}) 
                    \geq \sum_{i=1}^n d_2(\sigma(v_i),v'_{\nu(i)}) \geq \id{d_2}(E,E').
  \end{align*}
  This completes the proof.
\end{proof}

\begin{corollary}\label{cor:two-approx}
  For each two elections $E$ and $E'$ with the same numbers of
  candidates and voters we have:
\begin{align}
  \id{d_\swap}(E,E') \leq \id{d_\spearman}(E,E') \leq 2\cdot \id{d_\swap}(E,E').\label{ineq:spearman-2-swap}
\end{align}
\end{corollary}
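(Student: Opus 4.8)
The plan is to derive the corollary directly from Proposition~\ref{pro:one}, applied twice, using the two halves of the Diaconis--Graham inequality~\eqref{ineq:diaconis-graham} as the required pointwise comparisons between $d_\swap$ and $d_\spearman$. Both halves of~\eqref{ineq:diaconis-graham} hold for preference orders over an arbitrary common candidate set, so they supply exactly the hypothesis of Proposition~\ref{pro:one}.

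For the left inequality in~\eqref{ineq:spearman-2-swap}, I would invoke Proposition~\ref{pro:one} with $d_1 = d_\spearman$, $d_2 = d_\swap$, and $c = 1$. The hypothesis of the proposition---that $1 \cdot d_\spearman(u,v) \geq d_\swap(u,v)$ for every two preference orders $u,v$ over a common candidate set---is precisely the left half of~\eqref{ineq:diaconis-graham}. Hence $\id{d_\spearman}(E,E') \geq \id{d_\swap}(E,E')$ for all eligible $E$ and $E'$, which is the claimed bound $\id{d_\swap}(E,E') \leq \id{d_\spearman}(E,E')$.

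For the right inequality, I would instead take $d_1 = d_\swap$, $d_2 = d_\spearman$, and $c = 2$. Now the hypothesis of Proposition~\ref{pro:one} reads $2 \cdot d_\swap(u,v) \geq d_\spearman(u,v)$, which is the right half of~\eqref{ineq:diaconis-graham}. The proposition then yields $2 \cdot \id{d_\swap}(E,E') \geq \id{d_\spearman}(E,E')$, completing the chain in~\eqref{ineq:spearman-2-swap}.

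There is essentially no obstacle here: the only thing to verify is that the pointwise inequalities fed into Proposition~\ref{pro:one} are valid for \emph{all} candidate sets of the relevant cardinality, and this is immediate from the Diaconis--Graham inequality. One could alternatively prove~\eqref{ineq:spearman-2-swap} from scratch by fixing an optimal candidate matching $\sigma$ and voter matching $\nu$ for one isomorphic distance and reusing the same pair as a (generally suboptimal) witness for the other, but routing the argument through Proposition~\ref{pro:one} makes this bookkeeping unnecessary and keeps the proof to a couple of lines.
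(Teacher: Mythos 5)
Your proposal is correct and is exactly the derivation the paper intends: the corollary is stated immediately after Proposition~\ref{pro:one} and the Diaconis--Graham inequality~\eqref{ineq:diaconis-graham}, and follows by applying the proposition twice with $(d_1,d_2,c)=(d_\spearman,d_\swap,1)$ and $(d_1,d_2,c)=(d_\swap,d_\spearman,2)$, just as you describe.
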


In a follow-up work, Boehmer et
al.~\cite{boe-fal-nie-szu-was:c:metrics} analyzed various properties
of our distances (as well as of several other ones). In particular,
they established the maximum possible distance between two elections,
showed a few axiomatic properties, and experimentally evaluated
correlations between them.  One of the conclusions from their work is
that being able to compute swap and Spearman isomorphic distances
would be very useful.  Similar conclusions follow from the work of
Faliszewski et al.~\cite{fal-kac-sor-szu-was:c:microscope}, who have
shown connections between the swap isomorphic distance and the notions
of agreement, diversity, and polarization among the voters.
Unfortunately, as we will see in the following sections, not only are
swap and Spearman isomorphic distances intractable, but also the
typical workarounds---such as seeking approximate solutions or using
FPT algorithms---do not seem to be practical (consequently, the
authors of the just-cited papers focused on elections with up to $10$
candidates and up to $100$ voters and used brute-force algorithms).
Nonetheless, our FPT results do give hope that faster algorithms could
be developed later.

\section{Complexity of Computing Isomorphic Distances}

\begin{table*}

\centering
\begin{tabular}{c|ccc|ccc}
  \toprule
  &      & \sc with voter & \sc with candidate &  \multicolumn{3}{c}{parameter}\\
  $d$ & $d$-\textsc{ID} & \sc matching   & \sc matching    & $m$ & $n$ & $k$\\
  \midrule
  $d_\discrete $ & $\p$    & $\p$    & $\p$   & -- & -- & -- \\
  $d_\spearman $ & $\np$-complete & $\p$    & $\p$   & $\fpt$ & $\fpt$ & $\fpt$ \\
  $d_\swap $    & $\np$-complete & $\np$-complete & $\p$   & $\fpt$ & para-$\np$-hard & $\fpt$ \\
  \bottomrule
\end{tabular}

\caption{\label{tab:complexity}The complexity of computing isomorphic
  distances, also parametrized by the number of candidates $m$, the
  number of voters $n$, and the distance $k$.}
\end{table*}

To establish the complexity of computing our distances, we first need
to express this task as a decision problem (note that we overload the
terms $d$-\textsc{Isomorphic Distance} and $\id{d}$ to refer to both
the actual distance and the corresponding decision problem; we make
sure that it is always clear from the context which interpretation we
have in mind).

\begin{definition}
  Let $d$ be a distance over preference orders.  In the
  $d$-\textsc{Isomorphic Distance} problem (the $d$-{\sc ID} problem) we
  are given two elections, $E = (C,V)$ and $E' = (C',V')$ such that
  $|C| = |C'|$ and $|V| = |V'|$, and an integer $k$.  We ask if
  $\id{d}(E,E') \leq k$.
\end{definition}

We are also interested in two variants of this problem, the
\textsc{$d$-ID with Candidate Matching} problem, where the bijection
$\sigma$ between the candidate sets is given (and fixed), and the
\textsc{$d$-ID with Voter Matching} problem, where the voter
permutation $\nu$ is given (and fixed). They will help us later, when
discussing approximations and FPT algorithms. The former problem is in
$\p$ for polynomial-time computable distances, but this is not always
true for the latter.  We summarize our results in
Table~\ref{tab:complexity}.

\begin{proposition}\label{pro:matching}
  For a polynomial-time computable $d$, the problem \textsc{$d$-ID with Candidate
    Matching} is in $\p$.
\end{proposition}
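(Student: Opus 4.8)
The plan is to observe that once the candidate bijection $\sigma$ is fixed, the only remaining freedom in the definition of $\id{d}$ is the choice of the voter permutation $\nu \in S_n$, and optimizing over $\nu$ is exactly a minimum-weight perfect matching problem in a bipartite graph. Concretely, I would first replace $E = (C,V)$ by $\sigma(E) = (C', (\sigma(v_1), \ldots, \sigma(v_n)))$, so that both elections now share the candidate set $C'$; note this transformation is polynomial-time, and it leaves us needing to compute $\min_{\nu \in S_n} \sum_{i=1}^n d(\sigma(v_i), v'_{\nu(i)})$.

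Next I would build the complete bipartite graph on vertex classes $\{\sigma(v_1), \ldots, \sigma(v_n)\}$ and $\{v'_1, \ldots, v'_n\}$, assigning to the edge between $\sigma(v_i)$ and $v'_j$ the weight $d(\sigma(v_i), v'_j)$. Since $d$ is polynomial-time computable and the votes are given explicitly, all $n^2$ of these weights can be computed in polynomial time, and they are nonnegative because $d$ is a distance. A permutation $\nu \in S_n$ corresponds bijectively to a perfect matching in this graph, with total weight $\sum_{i=1}^n d(\sigma(v_i), v'_{\nu(i)})$, so $\id{d}$ with the fixed matching $\sigma$ equals the weight of a minimum-weight perfect matching.

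Finally I would invoke a standard polynomial-time algorithm for minimum-weight bipartite perfect matching (e.g., the Hungarian algorithm, running in $O(n^3)$ time), obtain the optimal value, and compare it with the input bound $k$, accepting if and only if it is at most $k$. There is no real obstacle here: the only thing to check carefully is that the reduction to bipartite matching is faithful (every $\nu$ gives a matching and vice versa, with matching objective values) and that the weight computation stays polynomial, both of which are immediate. This yields a polynomial-time algorithm for \textsc{$d$-ID with Candidate Matching}.
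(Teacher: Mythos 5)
Your proposal is correct and follows essentially the same route as the paper's proof: fix $\sigma$, compute all pairwise vote distances between $\sigma(E)$ and $E'$, and reduce the optimization over $\nu$ to a minimum-weight perfect matching in the resulting complete bipartite graph, solved by a standard polynomial-time algorithm. No gaps.
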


\begin{proof}
  Let $E$ and $E'$ be our input elections and let $\sigma$ be the
  input matching between their candidates.  It suffices to compute a
  distance between every pair of votes (one from $\sigma(E)$ and the
  other from $E'$), build a corresponding bipartite graph (where
  vertices on the left are the voters from $\sigma(E)$, the vertices
  on the right are the voters from $E'$, all possible edges exist and
  are weighted by the distances between the votes they connect), and
  find the smallest-weight matching (see, e.g., the overview of Ahuja
  et al.~\cite{ahu-mag-orl:b:flows} for a classic, polynomial-time
  algorithm). The weight of the matching gives the value of the
  distance, and the matching itself gives the permutation
  $\nu$. 
\end{proof}

Using a similar argument as in the proof of
Propositions~\ref{thm:ei-complexity}, we show that $\id{d_\discrete}$
is in~$\p$.

\begin{proposition}\label{thm:aei-hamming}
  The $d_\discrete$-\textsc{ID} problem is in $\p$.
\end{proposition}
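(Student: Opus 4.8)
The plan is to combine the isomorphism-testing idea from Proposition~\ref{thm:ei-complexity} with the bipartite-matching idea from Proposition~\ref{pro:matching}. The key observation is that with the discrete distance, $\id{d_\discrete}(E,E')$ counts the minimum number of votes that fail to match exactly, minimized over all candidate bijections $\sigma$. Equivalently, $n - \id{d_\discrete}(E,E')$ is the maximum number of voters that can be matched to voters in $E'$ with \emph{identical} preference orders, after applying the best $\sigma$.

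First I would argue that it suffices to try polynomially many candidate bijections $\sigma$. For each voter $v'_j$ in $E'$, there is exactly one bijection $\sigma_j$ that maps the (without loss of generality, sorted) vote $v_1$ of $E$ onto $v'_j$; call these the $n$ \emph{candidate bijections}. The claim is that the optimal $\sigma$ in the definition of $\id{d_\discrete}$ can be assumed to be one of the $\sigma_j$ whenever $\id{d_\discrete}(E,E') < n$: if some vote of $E$ is matched exactly to some vote of $E'$ under the optimal $\sigma$, then $\sigma$ maps that $E$-vote onto that $E'$-vote, so $\sigma$ coincides with the corresponding $\sigma_j$ (after relabeling which $E$-vote we call $v_1$); and if $\id{d_\discrete}(E,E') < n$, at least one vote is matched exactly, so some such $\sigma_j$ is optimal. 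If $\id{d_\discrete}(E,E') = n$, the algorithm simply reports $n$ when none of the bijections does better, which is always an achievable upper bound.

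Next, for each candidate bijection $\sigma_j$, I would compute the best voter matching by solving a bipartite matching problem as in Proposition~\ref{pro:matching}: build the complete bipartite graph on the voters of $\sigma_j(E)$ and the voters of $E'$, with edge weights $d_\discrete(\sigma_j(v_i), v'_\ell) \in \{0,1\}$, and find a minimum-weight perfect matching (equivalently, a maximum matching in the subgraph of weight-$0$ edges); this runs in polynomial time. Taking the minimum over all $n$ bijections $\sigma_j$ (and the trivial bound $n$) yields $\id{d_\discrete}(E,E')$, so comparing with $k$ decides the problem. The whole procedure is polynomial time since there are only $n$ bijections and each induces one polynomial-time matching computation.

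The main thing requiring care---though not really an obstacle---is the correctness argument for restricting attention to the $n$ candidate bijections, i.e., justifying that any optimal solution with at least one exactly-matched vote uses one of the $\sigma_j$. This is the same structural insight as in Proposition~\ref{thm:ei-complexity}, just applied with ``exact match of one vote'' in place of ``exact match of all votes,'' and the edge case $\id{d_\discrete}(E,E') = n$ must be handled separately since then no vote pins down $\sigma$.
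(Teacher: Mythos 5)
Your overall strategy is the same as the paper's: anchor the candidate bijection by forcing one vote of $E$ to coincide exactly with one vote of $E'$, then solve a bipartite matching problem (via Proposition~\ref{pro:matching}) for each such bijection and take the best outcome. However, there is a genuine flaw in how you enumerate the bijections. You fix the vote $v_1$ of $E$ and consider only the $n$ bijections $\sigma_j$ with $\sigma_j(v_1)=v'_j$, covering the mismatch with the parenthetical ``after relabeling which $E$-vote we call $v_1$.'' That relabeling cannot be performed once and for all: the algorithm does not know in advance which vote of $E$ is exactly matched in an optimal solution, so it must try every vote of $E$ in the role of the anchor. With only the $n$ bijections anchored at $v_1$ the algorithm is incorrect. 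For instance, take $C=\{a,b\}$, $C'=\{x,y\}$, $V=(a\succ b,\ b\succ a,\ b\succ a)$ and $V'=(x\succ y,\ x\succ y,\ x\succ y)$: every $\sigma_j$ is the map $a\mapsto x$, $b\mapsto y$ and yields distance $2$, whereas the optimal bijection $b\mapsto x$, $a\mapsto y$ yields distance $1$. The paper's proof avoids this by constructing a bijection $\sigma_{ij}$ for \emph{every} pair $(v_i,v'_j)$, i.e., up to $n^2$ bijections; once you make that change, the rest of your argument goes through.

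Two minor remarks. First, the edge case $\id{d_\discrete}(E,E')=n$ that you set aside never occurs: already the single bijection matching $v_1$ to $v'_1$ guarantees distance at most $n-1$, which is exactly the observation the paper uses to justify that some $\sigma_{ij}$ must be optimal. Second, your reformulation of the minimum-weight matching as a maximum matching on the weight-$0$ edges is correct but unnecessary; invoking Proposition~\ref{pro:matching} directly suffices.
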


\begin{proof}
  Given two elections $E = (C,V)$ and $E' = (C',V')$, where
  $|C| = |C'|$, $V = (v_1, \ldots, v_n)$ and
  $V' = (v'_1, \ldots, v'_n)$, for each pair of votes $(v_i,v'_j)$ we
  construct a mapping $\sigma_{ij}\colon C\to C'$ such that
  $\sigma_{ij}(v_i) = v'_j$.

We  choose $\sigma_{ij}$ that leads to the smallest $\id{d_\discrete}$
  distance (we compute these distances using the
  \textsc{$d_\discrete$-ID with Candidate Matching} problem).

  The correctness of the algorithm follows from the observation that
  the largest possible value of $\id{d_\discrete}(E,E')$ is $n-1$; we
  can always ensure that at least one vote from $E$ matches perfectly
  a vote from $E'$. Thus one of the $\sigma_{ij}$ matchings must be
  optimal. 
\end{proof}

The elections for which the $\id{d_\discrete}$ distance is small are
nearly identical (up to renaming the candidates and reordering the
voters). In consequence, we do not expect such elections to frequently
appear in applications. Indeed, the same conclusion follows from the
work of Boehmer et al.~\cite{boe-fal-nie-szu-was:c:metrics}, who
analyze various (isomorphic) distances.  Thus, we need more
fine-grained distances, such as $\id{d_\swap}$ and $\id{d_\spearman}$.
Unfortunately, they are $\np$-hard to compute and, indeed, for
$\id{d_\swap}$ we inherit this result from the {\sc Kemeny Score} problem.

\begin{proposition}\label{proposition:swapnphard}
  The $d_\swap$-\textsc{ID} problem is $\np$-complete, even for elections
  with four voters.
\end{proposition}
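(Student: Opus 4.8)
The plan is to reduce from the \textsc{Kemeny Score} problem, which is $\np$-complete even when the input consists of four votes~\cite{bar-tov-tri:j:who-won,dwo-kum-nao-siv:c:rank-aggregation}. Recall that in \textsc{Kemeny Score} we are given preference orders $v_1, \ldots, v_n$ over a candidate set $C$ together with an integer $k$, and we ask whether there is some $r \in \calL(C)$ with $\sum_{i=1}^n d_\swap(r, v_i) \le k$. Membership of $d_\swap$-\textsc{ID} in $\np$ is immediate: a certificate is a pair $(\sigma,\nu)$ consisting of a candidate bijection and a voter permutation, and one checks in polynomial time that $\sum_{i=1}^n d_\swap(\sigma(v_i), v'_{\nu(i)}) \le k$, since the swap distance is polynomial-time computable.

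For $\np$-hardness, given a \textsc{Kemeny Score} instance $(v_1, \ldots, v_n)$ over $C = \{c_1, \ldots, c_m\}$ with bound $k$, I would build the election $E = (C,V)$ with $V = (v_1, \ldots, v_n)$, and a second election $E' = (C,V')$ in which all $n$ votes of $V'$ are identical copies of one arbitrary fixed order $w \in \calL(C)$, say $w \colon c_1 \pref c_2 \pref \cdots \pref c_m$. Both elections have $m$ candidates and $n$ voters, so this is a legitimate $d_\swap$-\textsc{ID} instance, and the construction is clearly polynomial-time. The claim to prove is that $\id{d_\swap}(E,E') \le k$ if and only if the \textsc{Kemeny Score} of $(v_1, \ldots, v_n)$ is at most $k$; since the source instance has four votes, so does $E$, giving the claimed strengthening.

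The heart of the argument is a short chain of equalities. Because every vote of $E'$ equals $w$, the voter permutation $\nu$ is irrelevant, so $\id{d_\swap}(E,E') = \min_{\sigma \in \Pi(C,C)} \sum_{i=1}^n d_\swap(\sigma(v_i), w)$. Since $d_\swap$ is permutation-invariant (Proposition~\ref{prop:triangle} already uses this property of $d_\swap$), for every bijection $\sigma$ we have $d_\swap(\sigma(v_i), w) = d_\swap(v_i, \sigma^{-1}(w))$, hence $\sum_{i=1}^n d_\swap(\sigma(v_i), w) = \sum_{i=1}^n d_\swap(v_i, \sigma^{-1}(w))$. As $\sigma$ ranges over all bijections of $C$, the order $\sigma^{-1}(w)$ ranges over all of $\calL(C)$ — every linear order over $C$ is obtained from $w$ by a suitable renaming — so $\id{d_\swap}(E,E') = \min_{r \in \calL(C)} \sum_{i=1}^n d_\swap(v_i, r)$, which is precisely the \textsc{Kemeny Score} of the input votes.

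I do not expect a genuine obstacle; the two points that need care are (i) invoking the version of \textsc{Kemeny Score} hardness that already restricts to four votes, rather than the generic one, and (ii) applying permutation invariance correctly, i.e., checking that renaming candidates inside $\sigma(v_i)$ can be transferred to renaming candidates inside the fixed order $w$, so that the outer minimization over $\sigma$ turns into a minimization over all candidate consensus rankings $r$, which is exactly what the \textsc{Kemeny Score} objective optimizes.
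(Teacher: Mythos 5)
Your proposal is correct and follows essentially the same route as the paper: a reduction from \textsc{Kemeny Score} (using the four-vote hardness of Dwork et al.) in which the second election consists of $n$ identical copies of a fixed order, so that the voter matching becomes irrelevant and permutation invariance of $d_\swap$ turns the minimization over candidate bijections into the minimization over all consensus rankings. No gaps; the chain of equalities you give is exactly the paper's argument made slightly more explicit.
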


\begin{proof}
  Membership in $\np$ is easy to see. To show hardness, we give a
  reduction from the \textsc{Kemeny Score} problem which is $\np$-complete~\cite{bar-tov-tri:j:who-won}.
  In this problem we
  are given an election $E = (C,V)$ and an integer $k$, and we ask if
  there exists a preference order $p$ over $C$ such that
  $\sum_{v \in V}d_\swap(v,p) \leq k$ (this value is known as the Kemeny score).
  We reduce it to
  the $d_\swap$-\textsc{ID} problem in a straightforward way: Given
  election $E = (C,V)$ and $k$, our reduction outputs election $E$, a
  new election $E' = (C',V')$ with the same number of candidates and
  voters as $E$, where all the votes are identical, and the integer
  $k$.

  The reduction runs in polynomial time.
  
  Let us now argue that it is correct.  Let $V = (v_1, \ldots, v_n)$
  and let $V'$ consist of $n$ copies of $v'$.  We note that
  $\id{d_\swap}(E,E') = \min_{\sigma \in \Pi(C,C')}\sum_{i=1}^n
  d_\swap(\sigma(v_i),v') = \min_{\sigma' \in \Pi(C',C)}\sum_{i=1}^n
  d_\swap(v_i,\sigma'(v'))$, which is at most $k$ if and only if there
  exists a preference order $p\in \calL(C)$ such that
  $\sum_{v \in V}d_\swap(v,p) \leq k$.
\end{proof}

The above reduction takes polynomial-time and preserves many natural parameters of a \textsc{Kemeny Score} instance such as the number of candidates, the number of voters etc.
Additionally, the $d_\swap$-isomorphic distance in a created $d_\swap$-\textsc{ID} instance is equal to the Kemeny score $k$.
Therefore, a hardness result for \textsc{Kemeny Score} may imply an analogous result for $d_\swap$-\textsc{ID} in a straightforward way.
In particular, \textsc{Kemeny Score} remains
$\np$-complete even for the case of four
voters~\cite{dwo-kum-nao-siv:c:rank-aggregation}, so having a matching between the voters cannot make the problem simpler
(this also follows from the fact that one of the elections in the
reduction consists of identical votes).

\begin{corollary}\label{cor:swap-voter-matching}
  \textsc{$d_\swap$-ID with Voter Matching} is $\np$-complete.
\end{corollary}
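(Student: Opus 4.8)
The plan is to observe that the $\np$-hardness already follows from the very reduction used in the proof of Proposition~\ref{proposition:swapnphard}, because the instance it produces makes the voter matching vacuous. Recall that there, given a \textsc{Kemeny Score} instance $(E,k)$ with $E = (C,V)$ and $V = (v_1,\ldots,v_n)$, the reduction outputs $E$ together with an election $E' = (C',V')$ whose voter collection consists of $n$ copies of a single vote $v'$ (and the same bound $k$). First I would point out that for an election of this special shape, the voter permutation contributes nothing: for every $\nu \in S_n$ and every $\sigma \in \Pi(C,C')$ we have
\[
  \sum_{i=1}^n d_\swap(\sigma(v_i), v'_{\nu(i)}) \;=\; \sum_{i=1}^n d_\swap(\sigma(v_i), v'),
\]
since all $v'_{\nu(i)}$ are equal to $v'$. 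Hence, if the reduction additionally outputs the identity permutation as the fixed voter matching $\nu$, then the value of \textsc{$d_\swap$-ID with Voter Matching} on the resulting instance equals $\id{d_\swap}(E,E')$, which by the correctness argument in the proof of Proposition~\ref{proposition:swapnphard} is at most $k$ exactly when $(E,k)$ is a yes-instance of \textsc{Kemeny Score}. Since \textsc{Kemeny Score} is $\np$-complete~\cite{bar-tov-tri:j:who-won}, this gives $\np$-hardness.

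Then I would dispatch membership in $\np$: a certificate is simply a candidate bijection $\sigma \in \Pi(C,C')$, after which $\sum_{i=1}^n d_\swap(\sigma(v_i), v'_{\nu(i)})$ is computable in polynomial time (the voter matching $\nu$ being part of the input and fixed) and compared against $k$. Combining the two parts yields the corollary. If one wishes to record the stronger statement that the problem stays hard even for a constant number of voters, the same argument applies verbatim with the four-voter hardness of \textsc{Kemeny Score}~\cite{dwo-kum-nao-siv:c:rank-aggregation}, since the reduction preserves the number of voters and $E'$ still consists of identical votes.

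I do not expect any real obstacle here; the only point worth stating explicitly is the one above, namely that the reduction of Proposition~\ref{proposition:swapnphard} happens to build an instance in which fixing the voter matching does not change the optimal value, so the hardness transfers immediately to the fixed-voter-matching variant.
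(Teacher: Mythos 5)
Your proposal is correct and matches the paper's own argument: the paper likewise notes that the reduction from \textsc{Kemeny Score} in Proposition~\ref{proposition:swapnphard} produces an election whose votes are all identical, so fixing the voter matching changes nothing and the hardness transfers directly. Your explicit computation showing the sum is independent of $\nu$, and the separate treatment of $\np$-membership, just spell out what the paper leaves implicit.
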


The situation for $d_\spearman$-{\sc ID} is somewhat different. In this case
Litvak's rule~\cite{litv:j:dist-cons}, defined analogously to
the Kemeny rule, but for the Spearman distance, is polynomial-time
computable~\cite{dwo-kum-nao-siv:c:rank-aggregation} and we can lift
this result to the case of \textsc{$d_\spearman$-ID with Voter
  Matching}. Without the voter matching, $d_\spearman$-{\sc ID} is
$\np$-complete.

\begin{proposition}\label{pro:spear-with-voters}
  \textsc{$d_\spearman$-ID with Voter Matching} is in $\p$.
\end{proposition}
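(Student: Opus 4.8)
The plan is to use the fact that, with the voter matching fixed, the objective splits into a sum of per-candidate contributions, each depending only on the image of that single candidate under the sought bijection. This reduces the optimization over candidate bijections to a minimum-weight perfect matching (assignment) problem in a bipartite graph, which is solvable in polynomial time.

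Concretely, let $E = (C,V)$, $E' = (C',V')$ with $V = (v_1,\ldots,v_n)$, $V' = (v'_1,\ldots,v'_n)$, and let $\nu \in S_n$ be the given voter permutation; by relabeling the voters of $E'$ we may assume $\nu$ is the identity, so the quantity to be minimized over $\sigma \in \Pi(C,C')$ is $\sum_{i=1}^n d_\spearman(\sigma(v_i), v'_i)$. First I would rewrite a single term using the definition of the Spearman distance together with the observation that applying $\sigma$ to a vote does not move any candidate within it, i.e. $\pos_{\sigma(v_i)}(\sigma(c)) = \pos_{v_i}(c)$ for all $c \in C$. Substituting $c' = \sigma(c)$ into $d_\spearman(\sigma(v_i),v'_i) = \sum_{c' \in C'}\bigl|\pos_{v'_i}(c') - \pos_{\sigma(v_i)}(c')\bigr|$ gives $d_\spearman(\sigma(v_i),v'_i) = \sum_{c \in C}\bigl|\pos_{v'_i}(\sigma(c)) - \pos_{v_i}(c)\bigr|$, so that, after summing over $i$ and exchanging the order of summation,
\[
  \sum_{i=1}^n d_\spearman(\sigma(v_i), v'_i) \;=\; \sum_{c \in C} \cost\bigl(c,\sigma(c)\bigr),
  \qquad \text{where } \cost(c,c') := \sum_{i=1}^n \bigl|\pos_{v'_i}(c') - \pos_{v_i}(c)\bigr|.
\]

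Then I would compute all $|C|\cdot|C'|$ weights $\cost(c,c')$, which takes polynomial time, and build the complete bipartite graph on parts $C$ and $C'$ with these edge weights. A minimum-weight perfect matching in this graph — found in polynomial time by any classical assignment algorithm (e.g., the Hungarian method; see Ahuja et al.~\cite{ahu-mag-orl:b:flows}) — corresponds to an optimal bijection $\sigma$, and its weight equals $\min_{\sigma \in \Pi(C,C')}\sum_{i=1}^n d_\spearman(\sigma(v_i),v'_{\nu(i)})$, which we compare against $k$. There is no real obstacle beyond the decomposition step itself: fixing the voter matching removes the coupling between voters, and the position-invariance identity further decouples the choices $\sigma(c)$ from one another, leaving a pure assignment problem. (By contrast, for the swap distance this decomposition fails, since $d_\swap$ between two votes is not a sum of per-candidate displacements — which is exactly why \textsc{$d_\swap$-ID with Voter Matching} is hard, by Corollary~\ref{cor:swap-voter-matching}.)
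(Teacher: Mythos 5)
Your proposal is correct and follows essentially the same route as the paper's own proof: fixing the voter matching decouples the objective into per-candidate costs $\cost(c,c')=\sum_{i=1}^n |\pos_{v_i}(c)-\pos_{v'_{\nu(i)}}(c')|$, and the optimal candidate bijection is then found as a minimum-cost perfect matching in the complete bipartite graph on $C$ and $C'$. Your explicit justification of the decomposition via $\pos_{\sigma(v_i)}(\sigma(c))=\pos_{v_i}(c)$ is a slightly more detailed rendering of the same step the paper performs by writing the sum in terms of $\sigma^{-1}(c')$.
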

\begin{proof}
  Let $E = (C,V)$ and $E' = (C',V')$ be two elections, where
  $|C| = |C'|$, $V = (v_1, \ldots, v_n)$ and
  $V' = (v'_1, \ldots, v'_n)$, and let $\nu \in S_n$ be the given
  voter matching. 
  For a bijection $\sigma\colon C \rightarrow C'$, the Spearman
  distance between $E$ and $E'$ is
  $\sum_{i=1}^n d_\spearman(\sigma(v_i),v'_{\nu(i)})$, which is:
  \begin{align*}
    \sum_{c' \in C'} \sum_{i=1}^n |\pos_{v_i}(\sigma^{-1}(c')) - \pos_{v'_{\nu(i)}}(c')|.
  \end{align*}
  In consequence, the cost induced by matching candidates $c \in C$
  and $c' \in C'$ is
  $\cost(c,c') = \sum_{i=1}^n |\pos_{v_i}(c) -
  \pos_{v'_{\nu(i)}}(c')|$. To solve our problem, it suffices to find
  a minimum cost perfect matching in a bipartite graph where
  candidates from $C$ are the vertices on the left, candidate from
  $C'$ are the vertices on the right, and for each $c \in C$ and
  $c' \in C'$ we have an edge from $c$ to $c'$ with cost
  $\cost(c,c')$.
\end{proof}

\begin{theorem}\label{theorem:spearmannphard}
  The $d_\spearman$-\textsc{ID} problem is $\np$-complete.
\end{theorem}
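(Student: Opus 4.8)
The plan is to prove NP-hardness by a reduction from some well-known NP-complete problem. Membership in $\np$ is immediate: a certificate is a pair $(\sigma,\nu)$, and one can verify in polynomial time that $\sum_{i=1}^n d_\spearman(\sigma(v_i),v'_{\nu(i)}) \le k$. For hardness, the key observation is that — unlike the swap case — we cannot simply inherit hardness from Litvak's rule, since that aggregation problem is polynomial-time solvable, and (by Proposition~\ref{pro:spear-with-voters}) fixing the voter matching also makes the problem easy. So the hardness must come genuinely from the interaction between the candidate matching $\sigma$ and the voter matching $\nu$ being chosen simultaneously. This strongly suggests reducing from a problem that already has this ``two permutations at once'' flavor, such as \textsc{Hamiltonian Cycle/Path}, \textsc{3-Dimensional Matching}, or \textsc{Minimum Sum of Squares}-style clustering problems; given the displacement structure of the Spearman distance, a reduction from a combinatorial matching/packing problem (e.g.\ encoding a graph's edge set or a set-packing instance into two elections whose votes are built from carefully placed ``blocks'' of candidates) looks most promising.

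The main steps I would carry out are: (1)~choose the source problem $X$ (I would try \textsc{Hamiltonian Cycle} on a graph $G$, or an equivalent permutation-routing problem) and design a gadget that translates each vertex/edge of $G$ into a small cluster of candidates whose relative positions in the two elections differ by a controlled amount; (2)~build election $E$ so that its votes encode the ``source'' configuration and $E'$ so that its votes encode all the ``target'' configurations, in such a way that a low-cost pair $(\sigma,\nu)$ exists if and only if the matching $\nu$ corresponds to picking out the right target vote for each source vote, and $\sigma$ corresponds to the structure we are testing for in $X$; (3)~use the additivity of $d_\spearman$ over candidates, together with the fact that displacement is minimized exactly when positions agree, to compute the exact cost $k$ of a ``good'' solution and to show a strict gap: any $(\sigma,\nu)$ that does not correspond to a valid solution of $X$ incurs cost strictly more than $k$; (4)~verify the reduction runs in polynomial time and that the yes/no instances correspond.

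The hard part will be step~(3): engineering the gadgets so that the \emph{optimal} candidate matching $\sigma$ is forced to respect the combinatorial structure we care about, while simultaneously the voter matching $\nu$ is forced into the intended correspondence, with no ``cheating'' solution that mixes a bad $\sigma$ with a compensating $\nu$ to achieve cost $\le k$. Because $d_\spearman$ is quite forgiving (many small displacements can sum to the same total as one large one), I expect to need either padding candidates/voters whose sole purpose is to ``pin down'' parts of $\sigma$ at zero cost, or a weighting trick (duplicating certain votes many times) so that misplacing a structurally important candidate is prohibitively expensive. I would also need the Diaconis--Graham-type reasoning only indirectly; the real accounting will be a direct position-by-position summation. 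Once the gadget is right, proving the equivalence and the polynomial time bound should be routine bookkeeping.
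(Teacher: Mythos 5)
Your proposal is a plan, not a proof, and the part you defer---``engineering the gadgets so that the optimal $\sigma$ is forced to respect the combinatorial structure \ldots with no cheating solution''---is the entire technical content of the theorem. You have not committed to a source problem (you list \textsc{Hamiltonian Cycle}, \textsc{3-Dimensional Matching}, and clustering problems as candidates), you have not specified a single vote in either election, and you have not carried out any of the accounting in your step~(3). The NP-membership observation and the polynomial-time/correspondence check in step~(4) are the routine parts; without an explicit construction and a worked gap argument, nothing has been proved. So this must be assessed as having a genuine gap: the reduction itself is missing.

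That said, your diagnosis of where the difficulty lies and which tools are needed is accurate, and it matches what the paper actually does. The paper reduces from \textsc{3SAT} (not from a Hamiltonicity-type problem): each variable $x_i$ yields a pair of candidates $x_i^{\mathrm{t}},x_i^{\mathrm{f}}$ in $E$ and $y_i^{\mathrm{t}},y_i^{\mathrm{f}}$ in $E'$, and the truth assignment is read off from which of the two ways the pair is matched. Exactly as you anticipate, the candidate matching is ``pinned down'' by a huge number ($2L$) of dummy voters---$L$ pairs in $V$ that disagree only on the internal order of each $x_i^{\mathrm{t}},x_i^{\mathrm{f}}$ pair, matched against $2L$ identical voters in $V'$---which forces, on pain of an additive penalty of order $L$, that dummy candidates map to dummy candidates and that each $x_i$-pair maps to the corresponding $y_i$-pair. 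Large blocks of dummy candidates ($T$ candidates $d_1,\ldots,d_T$ plus per-clause blocks $G_j$ of size $P$) play the role of your ``prohibitively expensive misplacement'' device: placing a matched candidate on opposite sides of such a block costs at least the block size, and the per-clause blocks additionally force each clause voter to be matched to a clause voter of the same clause. The satisfaction check is then done by one $\mathrm{sat}(C_j)$ voter and two $\mathrm{free}(C_j)$ voters per clause, with a threshold $Z=2Ln+(8m+1)(T+Pm)$ achievable exactly when every clause has a literal whose candidate lands next to its intended partner. If you want to complete your proof, you would need to produce a construction of comparable precision and verify the gap inequalities; the paper's choice of \textsc{3SAT} is convenient precisely because the ``two permutations at once'' structure you identified decomposes into a candidate-side assignment (the truth values) and a voter-side selection (which literal satisfies each clause).
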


\begin{proof}

  The membership in $\np$ is clear and, hence, we focus on showing
  $\np$-hardness by giving a polynomial-time reduction from the
  classic \textsc{3SAT} problem. An instance of \textsc{3SAT} consists
  of a set of Boolean variables and a set of clauses, where each
  clause is a disjunction of exactly three literals (a literal is
  either a variable or its negation). We ask if there is an assignment
  of \textsc{True}/\textsc{False} values to the variables so that each
  clause contains at least one literal that evaluates to
  \textsc{True}. In such a case, we say that the input formula is
  satisfiable.

  \paragraph{Construction.}
  Let $I$ be an instance of \textsc{3SAT} with variables
  $x_1, \ldots, x_n$ and clauses $C_1, \ldots, C_m$.
  Our reduction
  forms an instance of the $d_\spearman$-\textsc{ID} problem by
  constructing two elections, $E = (C,V)$ and $E' = (C',V')$. We will
  use the following notation to simplify their description:
  \begin{align*}
    &P = 2(8m+1 + 6mn) \text{,} \\
    &T = 12n^2m + Pm\text{,~and} \\
    &L = (8m+1)(T + Pm)+ 6mn + 1 \text{.}
  \end{align*}
  The construction of the candidate sets $C$ and $C'$ is as follows:
  
  \begin{enumerate}
  \item For each variable $x_i$, $C$ contains candidates
    $x_i^{\mathrm{t}}$ and $x_i^{\mathrm{f}}$ and $C'$ contains
    candidates $y_i^{\mathrm{t}}$ and~$y_i^{\mathrm{f}}$.
    Intuitively, for each $i \in [n]$ candidates $x_i^{\mathrm{t}}$
    and $x_i^{\mathrm{f}}$ correspond to the variable $x_i$ and its
    negation, respectively, whereas candidates $y_i^{\mathrm{t}}$ and
    $y_i^{\mathrm{f}}$ correspond to their Boolean values (so a truth
    assignment where $x_i$ is \textsc{True} will correspond to a
    candidate matching where $x_i^{\mathrm{t}}$ is associated with
    $y_i^{\mathrm{t}}$ and $x_i^{\mathrm{f}}$ is associated with
    $x_i^{\mathrm{f}}$, whereas an assignment where $x_i$ is set to
    \textsc{False} will correspond to an opposite candidate matching;
    this intuition will become clearer in our correctness argument).

    Let
    $X = \{x_1^{\mathrm{t}}, x_1^{\mathrm{f}}, x_2^{\mathrm{t}},
    \ldots, x_n^{\mathrm{f}}\}$ and
    $Y = \{y_1^{\mathrm{t}}, y_1^{\mathrm{f}}, y_2^{\mathrm{t}},
    \ldots, y_n^{\mathrm{f}}\}$.

  \item We also add $T + Pm$ dummy candidates,
    $\{d_i\}_{i \in [T]} \cup \{g_{i, j}\}_{i \in [m], j \in [P]}$, to
    $C$, and $T + Pm$ dummy candidates,
    $\{d_i'\}_{i \in [T]} \cup \{g_{i, j}'\}_{i \in [m], j \in [P]}$,
    to~$C'$.  The role of the dummy candidates is to ensure that
    voters from appropriate groups are matched to each other in our
    elections.
    
    Let $D = \{d_1, d_2, \ldots, d_T\}$ and
    $D' = \{d_1', d_2', \ldots, d_T'\}$. For each $j \in [m]$, let
    $G_{j} = \{g_{j, 1}, \ldots, g_{j, P}\}$,
    $G_{j}' = \{g_{j, 1}', \ldots, g_{j, P}'\}$.    We set
    $G = G_1 \cup \cdots \cup G_m$, and
    $G' = G_1' \cup \cdots \cup G_m'$.
  \end{enumerate}
  For the sake of convenience, below we provide sizes of candidate sets that play
  particular role in our proof:
  \begin{align*}
    |X| = 2n, && |D| = T, && |G| = Pm 
    \text{\:\:(because for each $j \in [m]$ we have  $|G_j| = P$),}\\ 
    |Y| = 2n, && |D'| = T, && |G'| = Pm
    \text{\:\:(because for each $j \in [m]$ we have  $|G'_j| = P$).} 
  \end{align*}
  Consequently, $|C| = |C'| = 2n+T+Pm$.

  Let $\overrightarrow{C}$ denote the 
  following preference order:
  \begin{align*}
    x_1^{\mathrm{t}} &\succ x_1^{\mathrm{f}} \succ x_2^{\mathrm{t}} \succ x_2^{\mathrm{f}} \succ \ldots \succ x_n^{\mathrm{t}} \succ x_n^{\mathrm{f}}  \\
    & \succ g_{1, 1} \succ  g_{1, 2} \succ \ldots \succ g_{1, P} \succ g_{2, 1} \succ \ldots \succ g_{m, P} \succ 
     d_1 \succ d_2 \succ \ldots \succ d_T \text{.}
  \end{align*}
  For each $S \subseteq C$, by $\overrightarrow{S}$ we mean
  $\overrightarrow{C}$ restricted to the candidates from $S$.

  Preference order $\overrightarrow{C'}$, as well as its restrictions
  to subset of $C'$, are defined in the same way, but for the ``primed''
  candidates and for members of $Y$ instead of members of $X$.

  Next, we describe voter collections $V$ and $V'$. For each clause
  $C_j = \ell_{a} \lor \ell_{b} \lor \ell_{c}$, where $\ell_{a}$,
  $\ell_{b}$, and $\ell_{c}$ are literals, we include three voters in
  $V$ and three voters in $V'$. Let $x_a$, $x_b$ and $x_c$ be the
  variables corresponding to $\ell_{a}$, $\ell_{b}$, and $\ell_{c}$,
  respectively (so, for example, $\ell_a$ could be $x_a$, $\ell_b$
  could be $\neg x_b$, and $\ell_c$ could be $\neg x_c$).

  Without loss of generality, we assume that $a < b < c$.  We add the
  following clause voters (we put cardinalities of respective
  candidate groups on top of them):
  \begin{enumerate}
  \item We add two voters to $V$, each with the following preference
    order:    
    \begin{align*}
    \mathrm{free}(C_j) \colon \mybrace{\overrightarrow{X}}{2n} \succ \mybrace{\overrightarrow{G\setminus G_j}}{Pm-P} \succ \mybrace{\overrightarrow{D}}{T} \succ \mybrace{\overrightarrow{G_j}}{P}.
    \end{align*}
    By a small abuse of notation, we refer to both of these voters as
    the $\mathrm{free}(C_j)$ voters.  We also add a single voter
    $\mathrm{sat}(C_j)$, with preference order:
    \begin{align*}
      \mathrm{sat}(C_j) \colon \mybrace{\overrightarrow{X \setminus\{x_a^{\mathrm{t}}, x_b^{\mathrm{t}}, x_c^{\mathrm{t}}\}}}{2n-3} \succ \mybrace{\overrightarrow{G\setminus G_j}}{Pm-P} \succ \mybrace{\overrightarrow{D}}{T} \succ \mybrace{\overrightarrow{G_j}}{P} \succ \mybrace{x_a^{\mathrm{t}} \succ x_b^{\mathrm{t}} \succ x_c^{\mathrm{t}}}{3} \text{,}
    \end{align*}
    Intuitively, in our construction the $\mathrm{sat}(C_j)$ voter
    will correspond (through the voter matching) to a variable
    responsible for satisfying clause~$C_j$, whereas voters
    $\mathrm{free}(C_j)$ will correspond to those variables that, from
    the point of view of clause $C_j$, are free to have whatever
    values they like (or, in other words, whose values are not crucial
    for tha satisfaction of the clause).

  \item To describe the voters that we add to $V'$, for each
    $z \in \{a, b, c\}$ we define:
    \begin{align*}
      \mathrm{val}(\ell_z) =
      \begin{cases}
        y_z^{\mathrm{t}}       & \quad \text{if } \ell_z \text{ is a positive literal,}\\
        y_z^{\mathrm{f}}       & \quad \text{if } \ell_z \text{ is a negative literal.}\\
      \end{cases}
    \end{align*}
    Intuitively, $\mathrm{val}(\ell_z)$ corresponds to the value that
    variable $x_z$ should have to ensure that clause $C_j$ is
    satisfied.  We add three voters with the following preference
    orders to $V'$:
    \begin{alignat*}{9}
      &\mybrace{\overrightarrow{Y \setminus\{\mathrm{val}(\ell_a)\}}}{2n-1} &&\succ \mybrace{\overrightarrow{G\setminus G_j}}{Pm-P} &&\succ \mybrace{\overrightarrow{D}}{T} &&\succ \mybrace{\overrightarrow{G_j}}{P} &&\succ \mybrace{\mathrm{val}(\ell_a)}{1} \text{,} \\
      &\overrightarrow{Y \setminus\{\mathrm{val}(\ell_b)\}} &&\succ \overrightarrow{G\setminus G_j} &&\succ \;\overrightarrow{D} &&\succ \;\overrightarrow{G_j} &&\succ \mathrm{val}(\ell_b) \text{,~and} \\
      &\overrightarrow{Y \setminus\{\mathrm{val}(\ell_c)\}} &&\succ \overrightarrow{G\setminus G_j} &&\succ \;\overrightarrow{D} &&\succ \;\overrightarrow{G_j} &&\succ \mathrm{val}(\ell_c) \text{.}
    \end{alignat*}

  \end{enumerate}

  We also introduce a number of dummy voters.  Specifically, we add
  $L$ pairs of voters to $V$ where in each pair the voters have the
  following two preference orders:
  \begin{alignat*}{3}
    &\mybrace{x_1^{\mathrm{t}} \succ x_1^{\mathrm{f}} \succ x_2^{\mathrm{t}} \succ x_2^{\mathrm{f}} \succ \cdots \succ x_n^{\mathrm{f}}}{2n} &&\succ \mybrace{\overrightarrow{G}}{Pm+P} &&\succ \mybrace{\overrightarrow{D}}{T} \text{, and} \\
    & x_1^{\mathrm{f}} \succ x_1^{\mathrm{t}} \succ x_2^{\mathrm{f}} \succ x_2^{\mathrm{t}} \succ \cdots \succ x_n^{\mathrm{f}} &&\succ \;\;\;\overrightarrow{G} &&\succ \;\overrightarrow{D} \text{.}
  \end{alignat*}
  Next, we add $2L$ voters to $V'$, each with preference order:
  \begin{align*}
    &y_1^{\mathrm{t}} \succ y_1^{\mathrm{f}} \succ y_2^{\mathrm{t}} \succ  y_2^{\mathrm{f}} \succ\cdots \succ y_n^{\mathrm{f}}  \succ \overrightarrow{G} \succ \overrightarrow{D} \text{.}
  \end{align*}
  This completes the construction of elections $E$ and $E'$. Note that
  both of them have $(2n) + (T + Pm)$ candidates and $2L+3m$ voters. We
  ask if $d_\spearman$-\textsc{ID}$(E,E')$ is at most:
  \begin{align*}
      Z = 2Ln + (8m+1)(T + Pm) \text{.}
  \end{align*}

  \paragraph{Intuition.}
  Before we prove the correctness of the construction, let us give an
  intuitive explanation. The value $L$ is taken to be very large, in
  order to ensure that in the bijections witnessing the minimal
  Spearman distance, the $2L$ dummy voters from $V$ are matched to the
  $2L$ dummy voters from $V'$. Further, such a matching will ensure
  that
  \begin{inparaenum}[(i)]
  \item the dummy candidates are matched with the dummy candidates, and
 
  \item for each $i \in [n]$, each of the candidates
    $x_i^{\mathrm{t}}$ and $x_i^{\mathrm{f}}$ is matched either with
    $y_i^{\mathrm{t}}$ or with $y_i^{\mathrm{f}}$.
  \end{inparaenum}
  The dummy candidates are used to create a large block between
  particular candidates, so the optimal matchings must primarily take
  care of minimizing the number of situations where some matched
  candidates in the matched votes are on the other sides of this
  block. Each group $G_j$ is used to ensure that the voters
  corresponding to the same clause are matched to each other.

  \paragraph{Satisfiability Implies Bounded Distance.}
  Now, let us move to the formal proof of correctness. First, we show
  that if the input \textsc{3SAT} instance is satisfiable then the
  Spearman isomorphic distance between $E$ and $E'$ is at most $Z$.
  To this end, let $\Phi$ be a satisfying truth-assignment for the
  variables from $I$. We give 
  two bijections, between the candidates and the voters of our two
  elections, witnessing

  that $\id{d_\spearman}(E,E') \leq Z$:
  \begin{enumerate}
  \item Each dummy candidate $d_i \in C$ is mapped to $d_i' \in C'$,
    and each $g_{i, j} \in C$ is mapped to $g_{i, j}' \in C'$.
  \item For each variable $x_i$, if $\Phi(x_i) = \textsc{true}$ then
    $x_i^{\mathrm{t}}$ is mapped to $y_i^{\mathrm{t}}$ and
    $x_i^{\mathrm{f}}$ is mapped to $y_i^{\mathrm{f}}$. Otherwise,
    $x_i^{\mathrm{t}}$ is mapped to $y_i^{\mathrm{f}}$ and
    $x_i^{\mathrm{f}}$ to $y_i^{\mathrm{t}}$.
  \item Each dummy voter from $V$ is mapped to an arbitrary dummy voter
    from $V'$ (recall that the dummy voters from $V'$ have identical
    preferences, so it is irrelevant which dummy voters from $V$ are
    matched to which from $V'$).
  \item The matching of the clause voters is as follows. For each clause $C_j = \ell_{a} \vee \ell_{b} \vee \ell_{c}$,
    
    let $\ell_z$ be the literal that evaluates to \textsc{True} in
    $C_j$ (if there are several such literals, choose one
    arbitrarily). The voter from $V'$, corresponding to $C_j$ and
    whose preference order ends with $\mathrm{val}(\ell_z)$, is
    matched with $\mathrm{sat}(C_j)$. The remaining two clause voters
    from $V'$ that correspond to $C_j$ are matched with the two
    $\mathrm{free}(C_j)$ voters, arbitrarily.
  \end{enumerate}

  Let us compute the Spearman distance implied by this mapping. The
  dummy voters in $V$ and $V'$ agree on the positions of the dummy
  candidates. Further, since for each $i \in [n]$ each of the
  candidates $x_i^{\mathrm{t}}$ and $x_i^{\mathrm{f}}$ is mapped
  either with $y_i^{\mathrm{t}}$ or with $y_i^{\mathrm{f}}$, the dummy
  voters contribute a distance of $2Ln$ (every pair of dummy voters from $V$
  with different preference orders yields a distance of $2n$ to the two
  matched dummy voters from $V'$).

  Next, we analyze the clause voters. Consider a clause
  $C_j = \ell_{a} \vee \ell_{b} \vee \ell_{c}$. Each of the dummy
  candidates in the vote of $\mathrm{sat}(C_j)$ is shifted by two
  positions in comparison to the vote matched with
  $\mathrm{sat}(C_j)$. In each of the other two votes from $V'$ that
  correspond to $C_j$, the dummy candidates are shifted by one
  position in comparison to the votes matched with them (i.e., the
  $\mathrm{free}(C_j)$ voters). Thus, for the voters corresponding to
  $C_j$, the dummy candidates induce a distance of
  $4 (|G| + |D|) = 4(T + Pm)$. In vote $\mathrm{sat}(C_j)$, one of the
  candidates $x_a^{\mathrm{t}}$, $x_b^{\mathrm{t}}$, or
  $x_c^{\mathrm{t}}$ is on a position that is within distance of at
  most $2$ from the position of the matched candidate in the matched
  vote (because the corresponding variable is responsible for
  satisfying $C_j$). The remaining two out of these three candidates
  are within a distance of at most $T + Pm + 2n$ from the assigned
  candidates (indeed, this is the number of candidates in each of the
  elections). Each of the other nondummy candidates contribute a
  distance of at most $2n$. We can apply the same reasoning to the
  remaining voters corresponding to $C_j$ and we infer that the voters
  corresponding to $C_j$ induce at most the following distance (one
  way of thinking of the nondummy candidates part of the following
  equation is that for each of the three voters, each nondummy
  candidate contributes a distance of up to $2n$ but, additionally, one
  nondummy candidate in each of the $\mathrm{free}(C_j)$ votes and two
  nondummy voters from $\mathrm{sat}(C_j)$ also yield a distance of at
  most $T+Pm$):
  \begin{align*}
    &\underbrace{4(T + Pm)}_\text{from the dummy candidates} + \quad \underbrace{4(T + Pm) + 3\cdot 2n \cdot 2n}_\text{from the nondummy candidates} \quad = \quad
     8(T + Pm) + 12n^2 \text{.}
  \end{align*}
  Taking into account the dummy voters and all the $m$ clause voters,
  the upper bound on the distance is as follows (note that $12n^2m \leq T+Pm$):
  \begin{align*}
    &2Ln + 8m(T + Pm) + 12n^2m \leq 2Ln + (8m+1)(T + Pm) = Z \text{.}
  \end{align*}
  Consequently, our $d_\spearman$-\textsc{ID} instance has answer
  ``yes,'' as required. 

  \paragraph{Bounded Distance Implies Satisfiability.}
  Now, we consider the other direction, that is, we assume that
  $\id{d_\spearman}(E,E') \leq 2Ln + (8m+1)(T + Pm)$ and show that the
  input formula is satisfiable. For the sake of contradiction, let us
  assume the opposite, i.e., that the formula is not satisfiable.

  First, by the pigeonhole principle, we observe that some $2L-3m$
  dummy voters from $V$ must be matched with dummy voters from~$V'$
  (indeed, both elections have $2L+3m$ voters, of which $2L$ are
  dummy).  Among these voters, there are at least $L-3m$ pairs of
  dummy voters with different preference orders (recall the
  construction of the dummy voters in $V$). For each such pair, each
  candidate from $X$ that is not matched to a dummy candidate (i.e.,
  each candidate from $X$ that is matched to a member of $Y$)
  contributes at least one to the distance between the elections.  On
  the other hand, each candidate from $X$ that is matched to some
  dummy candidate contributes at least one unit of distance for every
  dummy voter that is matched to a dummy voter. Thus, if there are $s$
  candidates from $X$ that are matched to dummy candidates, then the
  distance is at least:
  \begin{align}
    \overbrace{(L-3m)(2n-s)}^{\substack{\text{members of $X$} \\ \text{matched to members of $Y$}}} &+ \overbrace{s \cdot (2L-3m)}^{\substack{\text{members of $X$ matched} \\ \text{to dummy candidates}}}  \label{eq:proof-np-h} \\
    & =2Ln -Ls -6mn +3ms +2Ls -3ms  \nonumber\\
    & =2Ln -6mn +Ls  \label{eq:proof-np-h-part} \\
    & =2Ln + (8m+1)(T+Pm) + 1 + (s-1)L  \nonumber\\
    & =Z + 1 + (s-1)L, \nonumber
  \end{align}
  which is greater than $Z$ for each $s \geq 1$. Thus, each candidate
  from $X$ must be matched to some candidate from $Y$.

  Similarly, we can show that each dummy candidate $d_i$ is mapped to
  $d_i'$, that each $g_{i, j}$ is mapped to $g_{i, j}'$, and that for
  each $i \in [n]$, each of the candidates $x_i^{\mathrm{t}}$ and
  $x_i^{\mathrm{f}}$ is matched either to $y_i^{\mathrm{t}}$ or to
  $y_i^{\mathrm{f}}$. For example, if some $d_i$ were not matched to
  $d'_i$, then the distance between the two elections would be at least:
  \[
    \underbrace{2n(L-3m)}_{\substack{\text{induced by the} \\ \text{nondummy candidates}}} + \underbrace{(2L-3m)}_{\substack{\text{induced by $d_i$ and $d'_i$ among} \\ \text{the matched dummy voters}}} = 2nL -6nm + 2L - 3m \geq 2nL - 6mn + L\text{, }
  \]
  which is equal to the value of \eqref{eq:proof-np-h-part} with $s=1$
  and, hence, is larger than $Z$. The same reasoning works for all the
  other dummy candidates. For the nondummy ones, i.e., for members of
  $X$, it works as well, but it requires a comment: As we have argued,
  members of $X$ contribute a distance of $2n(L-3m)$ due to the fact
  that each of them is ranked on two different positions (which differ
  by one) by the pairs of dummy voters with different preference
  orders (that are matched to the dummy voters in the other election,
  which all have equal preference orders); due to this, each of them
  certainly contributes a distance of one for each such pair. As there
  are $2n$ candidates in $X$ and there are at least $L-3m$ pairs of
  dummy voters with different preference orders that are matched to
  the dummy voters in the other election, in total this gives a
  distance of $2n(L-3m)$. If some $x_i^{\mathrm{t}}$ or
  $x_i^{\mathrm{f}}$ were matched to some $y_j^{\mathrm{t}}$ or
  $y_j^{\mathrm{f}}$, where $i \neq j$, then this matching would
  contribute an \emph{additional} distance of at least one, for each
  of the $2L-3m$ dummy voters matched to the other dummy voters.  This
  would contribute a distance of $2L-3m$.

  Next, we prove that each clause voter must be assigned to a clause
  voter associated with the same clause. Let us assume that this is
  not the case and that:
  \begin{enumerate}
  \item[] There is some clause voter $v$ from $V$, associated with
    clause $C_j$, that is either assigned to a dummy voter from $V'$
    or to some clause voter from $V'$ associated with a different
    clause.
  \end{enumerate}
  Consequently, by the pigeonhole principle: 
  \begin{enumerate}
  \item[] There is some clause voter from $V'$, associated with clause
    $C_j$, that is either assigned to a dummy voter from $V$ or to
    some clause voter from $V$ associated with a different clause.
  \end{enumerate}
  Regarding the first pair of voters, the candidates from $G_j$ yield
  a distance of at least $P(T-3)$, and they yield a distance at least
  $P(T-1)$ for the second one. The dummy voters that are assigned to
  other dummy voters yield a distance of at least $2n(L-3m)$.  Altogether,
  this gives at least:
  \begin{align*}
    2n(L-3m) + P(T-3) + P(T-1)
      &> 2n(L-3m) + PT  \\
      &= (2nL-6mn) + 2(8m+1 + 6mn)T  \\
      & > (2nL-6mn) + 6mn + (8m+1)2T \\
      & > 2nL + (8m+1)(T + Pm) = Z \text{,} 
  \end{align*}
  which contradicts our assumption that the elections are at most at
  distance $Z$.

  Finally, let us argue that the matching between the nondummy
  candidates corresponds to a satisfying truth assignment for the
  input \textsc{3SAT} formula.  By the preceding discussions, we know
  that the dummy voters yield a distance of $2nL$.  Next, let us
  consider some clause $C_j = \ell_{a} \lor \ell_{b} \lor
  \ell_{c}$. As we have argued before, within the votes corresponding
  to $C_j$, the dummy candidates induce a distance of
  $4 (|G| + |D|) = 4(T + Pm)$ (recall the calculations in the proof
  for the previous direction).  Further, for each of the two voters
  $\mathrm{free}(C_j)$ there is a nondummy candidate that induces a
  distance of at least $|G| + |D| = T + Pm$ (and one of them induces a
  distance larger by at least $1$). For the voter $\mathrm{sat}(C_j)$,
  there are two nondummy candidates that induce a distance of at least
  $|G| + |D| = T + Pm$, each.  In fact, if neither $x_a^{\mathrm{t}}$
  were matched to $\mathrm{val}(\ell_a)$, nor $x_b^{\mathrm{t}}$ were
  matched to $\mathrm{val}(\ell_b)$, nor $x_c^{\mathrm{t}}$ were
  matched to $\mathrm{val}(\ell_c)$, then there would be three such
  candidates and the distance between the two elections would be
  larger than:
  \begin{align*}
    2Ln + (8m+1)(T + Pm) =Z \text{, }
  \end{align*}
  which is impossible by our assumptions. This means that for each
  clause $C_j = \ell_{a} \vee \ell_{b} \vee \ell_{c}$ it must be that
  either $x_a^{\mathrm{t}}$ is matched to $\mathrm{val}(\ell_a)$ or
  $x_b^{\mathrm{t}}$ is matched to $\mathrm{val}(\ell_b)$ or
  $x_c^{\mathrm{t}}$ is matched to $\mathrm{val}(\ell_c)$.  However,
  this means that our input formula is satisfiable, via truth
  assignment that assigns to each variable $x_i$ value \textsc{True}
  exactly if candidate $x_i^{\mathrm{t}}$ is matched to
  $y^{\mathrm{t}}_i$. This contradicts the assumption that the input
  formula is not satisfiable and completes the proof.

\end{proof}

\section{Attempts to Circumvent Intractability}

In principle, the fact that $\id{d_\swap}$ and $\id{d_\spearman}$ are
intractable does not need to mean that they are difficult to solve in
practice. In this section we evaluate to what extent we can sidestep
our hardness results, either by considering approximation algorithms
or by designing fixed-parameter tractable (FPT) ones.

\subsection{Approximability}

Unfortunately, the news regarding approximation are negative.  It
turns out that we cannot hope for a constant-factor polynomial-time
approximation algorithm for our problems, unless \textsc{Graph
  Isomorphism} is in $\p$, which seems unlikely.  In the discussion
below, when we speak of $d_\swap$-\textsc{ID} and
$d_\spearman$-\textsc{ID} we mean the optimization variants of these
problems, where instead of deciding if the distance between two given
elections is at most a given value, we ask to compute the distance
between these elections.

\begin{theorem}\label{thm:hardness_approx_m}
  For each $\alpha < 1$, there is no polynomial-time
  $|C|^\alpha$-approximation algorithm neither for
  $d_\spearman$-\textsc{ID} nor for $d_\swap$-\textsc{ID}, unless the
  \textsc{Graph Isomorphism} problem is in $\p$.
\end{theorem}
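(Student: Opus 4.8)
The plan is to prove the statement for $d_\swap$-\textsc{ID} only; the case of $d_\spearman$-\textsc{ID} then follows from Corollary~\ref{cor:two-approx}. Indeed, since $\id{d_\swap}(E,E')\le\id{d_\spearman}(E,E')\le 2\,\id{d_\swap}(E,E')$, a polynomial-time $|C|^\alpha$-approximation for either metric yields (after at most doubling its output) a polynomial-time $2|C|^\alpha$-approximation for the other; moreover $2|C|^\alpha\le|C|^{\alpha'}$ for every $\alpha'\in(\alpha,1)$ once $|C|$ is large enough, and we may always pad an instance by adding candidates ranked last in every vote of both elections, which leaves $\id{d_\swap}$ and $\id{d_\spearman}$ unchanged while making $|C|$ as large as we please. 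So fix $\alpha<1$ and suppose, toward a contradiction, that there is a polynomial-time algorithm $\calA$ that on every pair of elections returns a value in $\bigl[\id{d_\swap}(E,E'),\ |C|^\alpha\cdot\id{d_\swap}(E,E')\bigr]$.

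The core of the argument is a gap-producing polynomial-time reduction from \textsc{Graph Isomorphism}. Given graphs $G$ and $H$ we first reject the obviously non-isomorphic inputs, so we may assume that $G$ and $H$ have the same number $n$ of vertices, the same number of edges, and (after routine preprocessing) no isolated vertices. The reduction outputs two elections $E_G,E_H$ over a common candidate set of size $m=\poly(n)$ (with the degree of the polynomial depending on $\alpha$) together with a threshold $s\ge 1$, so that: (i)~if $G\cong H$ then $\id{d_\swap}(E_G,E_H)\le s$; and (ii)~if $G\not\cong H$ then $\id{d_\swap}(E_G,E_H)>m^\alpha\cdot s$. Granting this, we run $\calA(E_G,E_H)$ and answer ``isomorphic'' exactly when the returned value is at most $m^\alpha s$: in case~(i) it is at most $m^\alpha\,\id{d_\swap}(E_G,E_H)\le m^\alpha s$, and in case~(ii) it is at least $\id{d_\swap}(E_G,E_H)>m^\alpha s$. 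This is a polynomial-time decision procedure for \textsc{Graph Isomorphism}, which is the desired contradiction.

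It remains to build $E_G$, $E_H$, and $s$, which we do in the spirit of the proof of Theorem~\ref{theorem:spearmannphard}: the candidate set consists of one ``vertex candidate'' per vertex of the graph together with a large pool of ``separator'' candidates (plus padding to bring the total up to the desired $m$); for each vertex $i$ we add one vote that ranks the vertex candidates of $i$'s neighbours in a distinguished top region of the ballot, with all separators forming a rigid block that keeps this region far from the rest of the ballot, and symmetrically for $H$. The separator block forces every candidate matching of cost at most $m^\alpha s$ to map vertex candidates to vertex candidates (crossing a vertex candidate past the separators costs at least a constant fraction of the separator-block size in some vote), and, once restricted to the vertex candidates, such a matching must be the lift of a permutation $\pi$ of $[n]$ under which the encoding votes of $E_G$ and $E_H$ can be matched at zero cost --- which happens precisely when $\pi$ is an isomorphism $G\to H$, whereas any $\pi$ that is not forces at least one encoding vote to be matched only at cost proportional to the separator-block size. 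Finally we add a single ``flag'' vote to each election, coming from a small rigid gadget of a few further candidates, whose two copies differ by exactly $s$ swaps under every admissible matching; this makes $\id{d_\swap}(E_G,E_H)$ equal to $s$ rather than $0$ in the isomorphic case. The positivity of $s$ is essential: a reduction attaining distance $0$ in the isomorphic case would, together with Proposition~\ref{thm:ei-complexity}, place \textsc{Graph Isomorphism} in $\p$ unconditionally, and we additionally let $s$ grow with $n$ so that the reduction cannot decide the gap on its own.

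The main obstacle is the calibration in~(ii): enlarging the separator block (and the number of copies of auxiliary votes) to rigidify the matching also enlarges $m$, hence the target ratio $m^\alpha$, so the cost --- proportional to the separator-block size --- of a single misaligned encoding vote must still exceed $m^\alpha s$. This is exactly where the hypothesis $\alpha<1$ enters: taking the separator block and the auxiliary votes to be a large enough fixed polynomial in $n$ makes a single misalignment cost a polynomial factor more than $m^\alpha s$, while leaving the isomorphic case untouched, since there all encoding and auxiliary votes match perfectly and only the flag vote contributes. The remaining work --- checking that every candidate matching of cost at most $m^\alpha s$ really is the lift of a graph isomorphism, and that every non-isomorphism really does cost more than $m^\alpha s$ --- is delicate but routine bookkeeping, in the same vein as (and somewhat more involved than) the calculations in the proof of Theorem~\ref{theorem:spearmannphard}.
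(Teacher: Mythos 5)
Your overall strategy---a gap-producing reduction from \textsc{Graph Isomorphism} built around a separator block whose size is a polynomial in $n$ of degree roughly $\nicefrac{1}{1-\alpha}$, followed by a threshold test on the approximation algorithm's output---is exactly the paper's, and your discussion of where $\alpha<1$ enters the calibration is on target. The problem is the graph encoding. You take the candidates to be the vertices and, for each vertex $i$, add one vote ranking $N(i)$ in the protected top region. The separator block then forces any cheap solution to consist of a candidate bijection $\pi$ together with a \emph{separate} vote bijection $\mu$ satisfying $\pi(N_G(i))=N_H(\mu(i))$ for all $i$; you assert this happens precisely when $\pi$ is an isomorphism, but that is false because $\mu$ is not tied to $\pi$. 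Concretely, let $G=C_6$ on $\{1,\dots,6\}$ and let $H$ be the disjoint union of two triangles on $\{a,b,c\}$ and $\{d,e,f\}$ (both $2$-regular, six vertices, six edges). With $\pi(1)=a$, $\pi(3)=b$, $\pi(5)=c$, $\pi(2)=d$, $\pi(4)=e$, $\pi(6)=f$ one checks $\pi(N_G(1))=\{d,f\}=N_H(e)$, $\pi(N_G(2))=\{a,b\}=N_H(c)$, and so on for all six vertices, even though $G\not\cong H$. On this pair your two elections differ only by reorderings inside the top and bottom blocks, so their distance is $O(n^3)$, well inside the ``yes'' range; the gap in your case~(ii) fails and the reduction does not decide \textsc{Graph Isomorphism}. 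The paper avoids this by letting the \emph{voters} correspond to the edges, each edge-vote ranking its two endpoints in the top region: a cheap matching then yields a vertex bijection $\sigma$ and an edge bijection $\tau$ with $\tau(\{u,v\})=\{\sigma(u),\sigma(v)\}$, which is exactly the two-bijection formulation of graph isomorphism. (A vertex-votes construction can be rescued, but only with an extra gadget that forces $\mu=\pi$, which you do not provide.)

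Two smaller remarks. Deriving the Spearman case from the swap case via Corollary~\ref{cor:two-approx} is a legitimate reorganization (the paper instead runs essentially the same construction for both metrics), but you use without proof that padding both elections with universally-bottom candidates leaves the isomorphic distances unchanged; that requires an exchange argument showing some optimal matching maps padding candidates to padding candidates. Also, the ``flag vote'' forcing the yes-case distance to equal a positive $s$ is unnecessary: the argument only needs the yes-case optimum to be \emph{small} (at most $mn^2$ in the paper, possibly $0$ on some instances), since the decision rule compares the approximation algorithm's output against the threshold rather than testing the distance for zero; in the paper the needed slack in the yes case comes for free from the arbitrary fixed orderings inside the top and bottom blocks.
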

\begin{proof}
  We first focus on the Spearman distance. Towards a contradiction,
  let us assume that there is a polynomial-time $|C|^\alpha$-approximation algorithm
  $\calA$ for $d_\spearman$-\textsc{ID} for some $\alpha < 1$.
  We will show that $\calA$ can
  be used to solve \textsc{Graph Isomorphism}.

  Let $I$ be an instance of \textsc{Graph Isomorphism}, consisting of
  two undirected graphs, $G = (\mathit{Vtx}, \mathit{Edg})$ and
  $G' = (\mathit{Vtx}', \mathit{Edg}')$. We use a somewhat nonstandard
  definition of \textsc{Graph Isomorphism} and assume that in $I$ we
  ask whether there are \emph{two} bijections,\footnote{Typically, one
    asks for a bijection between the vertices only, and it is assumed
    that the other bijection (between the edges) is induced by the
    first one. In our case it will be easier to work with two separate
    bijections. The two approaches yield the same decision problem.}
  $\sigma\colon \mathit{Vtx} \to \mathit{Vtx}'$ and
  $\tau\colon \mathit{Edg} \to \mathit{Edg}'$, such that for each
  $\mathit{edg} = \{\mathit{vtx}_1, \mathit{vtx}_2\} \in \mathit{Edg}$
  it holds that
  $\tau(\mathit{edg}) = \{\sigma(\mathit{vtx}_1),
  \sigma(\mathit{vtx}_2)\}$.  Without loss of generality, let us
  assume that $|\mathit{Vtx}| = |\mathit{Vtx}'| = n$ and
  $|\mathit{Edg}| = |\mathit{Edg}'| = m$. Further, we assume that
  there are no isolated vertices in any of the two graphs and that for
  each vertex there is an edge to which it is not incident (i.e., that
  neither of the graphs is a star).

  Given $I$, we construct an instance $J$ of $d_\spearman$-\textsc{ID}
  that consists of two elections, $E = (C, V)$ and $E' = (C', V')$, as
  follows. The voters in $V$ and $V'$ correspond to the edges from
  $\mathit{Edg}$ and $\mathit{Edg}'$, respectively. Further, for each
  vertex $\mathit{vtx} \in \mathit{Vtx}$ we add one candidate to $C$
  (denoted by the same symbol), and for each
  $\mathit{vtx}' \in \mathit{Vtx}'$ we add a candidate to
  $C'$. Additionally, we add
  $L = \lceil(2mn^2)^{\frac{1}{1-\alpha}}\rceil$ dummy candidates
  $D = \{d_1, \ldots, d_L\}$ to $C$ and $L$ dummy candidates
  $D' = \{d_1', \ldots, d_L'\}$ to $C'$.

  In consequence, our elections have $n+L$ candidates and $m$ voters
  each.

  Finally, we describe the
  preferences of the voters. For each edge
  $\mathit{edg} = \{\mathit{vtx}_1, \mathit{vtx}_2\}$, the voter in
  $E$ that corresponds to $\mathit{edg}$ has preference order of the form:
  \[
    \{\mathit{vtx}_1, \mathit{vtx}_2\} \succ d_1 \succ \cdots \succ d_L \succ \mathit{Vtx} \setminus \{\mathit{vtx}_1, \mathit{vtx}_2\}.
  \]
  That is, this voter ranks $\mathit{vtx}_1$ and $\mathit{vtx}_2$ on
  the first two positions (in some fixed, arbitrary order), then all
  the dummy candidates (in the order of increasing indices), and
  finally all remaining vertex candidates (in some fixed, arbitrary
  order).  We construct the preference orders of the voters from $V'$
  analogously.

  Now, observe that if there exists an isomorphism $(\sigma, \tau)$
  between the two graphs in the original instance, then the Spearman
  isomorphic distance between $E$ and $E'$ is at most $mn^2$. Indeed,
  it is apparent that this distance is witnessed by the bijection
  between voters $\sigma$ and by the function renaming the candidates
  $\tau$ combined with mapping $\tau_D$ such that $\tau_D(d_i) = d_i'$
  for each $d_i \in D$. Consequently, on this instance our
  approximation algorithm $\calA$ must find a distance that is lower
  than the following (note that $n < L$, so $1+\frac{n}{L} < 2$ and,
  as $\alpha < 1$, $(1+ \frac{n}{L})^\alpha < 2$):
  \begin{align*}
    mn^2 (L + n)^\alpha &= mn^2 \left(1 + \frac{n}{L}\right)^\alpha L^{\alpha} <  2mn^2 L^{\alpha} = 2mn^2 L^{\alpha-1} L \\
                        &=   \frac{2mn^2}{\lceil (2mn^2)^{\frac{1}{1-\alpha}} \rceil^{(1-\alpha)}} L \leq
                          \frac{2mn^2}{(2mn^2)^{\frac{1}{1-\alpha}(1-\alpha)}} L = L\text{.}
  \end{align*}
  Next, we assess the Spearman isomorphic distance in case when the answer to
  the original instance $I$ is ``no''. Let $\sigma$ and $\tau$ be two
  bijections between the sets of candidates and voters,
  respectively. First, let us analyze what happens when
  $\sigma(d_i) \notin D'$ for some $d_i \in D$. Since $d_i$ is ranked
  on the same position by each voter in $V$, and each vertex candidate
  appears at least once in the first two positions (because we do not
  have isolated vertices), and at least once in the last $n - 2$
  positions (because a graph is not a star), we infer that the
  Spearman isomorphic distance between $E$ and $E'$ is at least equal to
  $L$. Next, we move to the case when $\sigma(d_i) \in D'$ for each
  $d_i \in D$. Since $\sigma$ and $\tau$ (restricted to not include
  the dummy candidates) do not witness an isomorphism between $G$ and
  $G'$, it must be the case that there exists a voter $v$ and a
  candidate $c$ such that $c$ is ranked in the top two positions by
  $v$, yet $\sigma(v)$ ranks $\tau(c)$ in the last $n-2$
  positions. This means that the Spearman isomorphic distance is at
  least equal to~$L$.

  Thus, $\calA$ finds that the isomorphic distance between $E$ and
  $E'$ is lower than $L$ if and only if the answer to the original
  instance $I$ is ``yes''. This completes the proof for the case of
  the Spearman distance.

  The result for the swap distance follows by an almost identical
  construction and arguments. The difference is that now each of the
  elections we construct has $L+n$ dummy candidates (ranked in the
  same, consecutive blocks). We observe that if $G$ and $G'$ are
  isomorphic, then we still have that the swap isomorphic distance
  between the constructed elections is at most $nm^2$.  Thus, in this
  case, the approximation algorithm has to find a distance smaller
  than $L$. On the other hand, if the graphs are not isomorphic then
  the distance is at least $L$. To see this, note that if each dummy
  candidate is matched to a dummy candidate in the other election,
  then the same argument as for Spearman still holds (more precisely,
  in at least one vote, at least one vertex candidate would have to be
  swapped with all the dummy candidates).  Thus, let us consider the
  case where some dummy candidate $d_i$ is matched to a vertex
  candidate $\mathit{vtx}'_j$ in the other election. However, $d_i$ is
  ranked in the same position in all the votes in $E$, but
  $\mathit{vtx}'_j$ is ranked among top two candidates in at least one
  vote in $E'$, and is ranked among bottom $n-2$ candidates in at
  least one vote in $E'$. These two votes and $d_i$ contribute at
  least distance $L$ as, altogether, $d_i$ has to be swapped with at
  least $L$ out of $L+n$ dummy candidates (since at least $L$ of the
  dummy candidate are not matched to the vertex candidates).

\end{proof}

The same approach leads to an even stronger inapproximability result
with respect to the number of voters.

\begin{theorem}\label{thm:hardness_approx_n}
  For each $\alpha \geq 1$ there is no polynomial-time
  $|V|^\alpha$-approximation algorithm neither for
  $d_\spearman$-\textsc{ID} nor for $d_\swap$-\textsc{ID}, unless
  \textsc{Graph Isomorphism} is in $\p$.
\end{theorem}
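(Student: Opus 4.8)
The plan is to reuse, essentially verbatim, the reduction from \textsc{Graph Isomorphism} constructed in the proof of Theorem~\ref{thm:hardness_approx_m}, changing only the number $L$ of dummy candidates. Recall that there, from two graphs $G, G'$ with $n$ vertices and $m$ edges, one builds elections $E$ and $E'$ each with $|V| = m$ voters (one per edge) and $n + L$ candidates in the Spearman case (and $L + 2n$ candidates in the swap case, using a block of $L+n$ dummies), and that the analysis there establishes two facts that are both insensitive to how large $L$ is: if $G$ and $G'$ are isomorphic then the isomorphic distance is at most a fixed polynomial $q = q(n,m)$ that does not involve $L$ at all (namely $q = mn^2$ for Spearman and $q = nm^2$ for swap), whereas if $G$ and $G'$ are not isomorphic then the isomorphic distance is at least $L$. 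The key point is that the ``not isomorphic'' lower bound in that proof never used an \emph{upper} bound on $L$: whether some dummy candidate is matched to a vertex candidate (forcing, in some pair of matched votes, a displacement or a number of swaps that only grows with $L$) or all dummies are matched among themselves (so that non-isomorphism forces some vertex candidate to cross the entire block of at least $L$ dummies), the argument yields a distance of at least $L$ no matter how large $L$ is.

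Given this, here is what I would do. Fix $\alpha \ge 1$ and set $L := m^\alpha \cdot q(n,m) + 1$, that is, $L := m^{\alpha+1} n^2 + 1$ for Spearman and $L := nm^{\alpha+2} + 1$ for swap. Since $\alpha$ is a constant, $L$ is polynomial in the input size, so the reduction still runs in polynomial time and produces elections with polynomially many candidates and voters. Now suppose, towards a contradiction, that there is a polynomial-time $|V|^\alpha$-approximation algorithm $\calA$ for $d_\spearman$-\textsc{ID} (the swap case is identical, using the swap construction). Run $\calA$ on the constructed instance $J$. Because $|V| = m$: if $G$ and $G'$ are isomorphic then $\calA$ outputs a value at most $m^\alpha \cdot q(n,m) < L$, while if they are not isomorphic then $\calA$ outputs a value at least $\id{d_\spearman}(E,E') \ge L$. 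Hence comparing the output of $\calA$ with $L$ decides \textsc{Graph Isomorphism} in polynomial time, contradicting the assumption that \textsc{Graph Isomorphism} is not in $\p$. This proves the claim for $d_\spearman$-\textsc{ID}; running the same argument on the swap construction with $L := nm^{\alpha+2}+1$ proves it for $d_\swap$-\textsc{ID}.

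I do not anticipate a genuine obstacle here: the only points that need care are (i) confirming that enlarging $L$ does not weaken the ``no''-case lower bound from the proof of Theorem~\ref{thm:hardness_approx_m}, which is immediate since that bound is monotone in $L$ (more dummy candidates can only increase the forced displacement), and (ii) confirming that $L$ stays polynomially bounded, which holds for every fixed $\alpha$. One may also remark that stating the result for $\alpha \ge 1$ loses no generality, as a $|V|^\beta$-approximation with $\beta < 1$ is in particular a $|V|^\alpha$-approximation for every $\alpha \ge \beta$.
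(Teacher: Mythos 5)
Your proof is correct. The paper's own proof of Theorem~\ref{thm:hardness_approx_n} follows the same high-level strategy (a gap reduction from \textsc{Graph Isomorphism} in which the dummy block $L$ is retuned to beat the approximation ratio), but it does not reuse the construction of Theorem~\ref{thm:hardness_approx_m} verbatim: it transposes the roles of the two graph components, letting candidates correspond to edges and voters to vertices, and consequently it must restrict attention to \emph{regular} graphs (invoking the fact that \textsc{Graph Isomorphism} remains GI-complete on regular graphs) so that every voter ranks the same number $\delta$ of edge-candidates above the dummy block and the dummies occupy identical positions in all votes. Your observation that this transposition is unnecessary is sound: in the Theorem~\ref{thm:hardness_approx_m} construction the number of voters equals $m$ and does not grow with $L$, the yes-case optimum is bounded by a polynomial $q(n,m)$ that is independent of $L$, and the no-case lower bound of $L$ is monotone in $L$ (a mismatched candidate must still cross the entire dummy block), so taking $L > m^{\alpha} q(n,m)$ defeats any $|V|^{\alpha}$-approximation for any constant $\alpha$, with $L$ still polynomial. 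Your route thus buys simplicity (no regular-graph detour, no new election construction); the paper's version yields instances with $n$ rather than $m$ voters, which is at most a cosmetic strengthening. One small quantitative point: for the swap case the safe yes-case bound is $q = mn^2$ (each of the $m$ matched vote pairs needs at most $\binom{n}{2}$ swaps among non-dummy candidates, or simply use $d_\swap \le d_\spearman$), rather than the $nm^2$ you copied, which is an upper bound only when $m \ge n$; setting $L := m^{\alpha+1}n^2 + 1$ in both cases removes the issue without changing anything structural.
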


\begin{proof}
  We use a very similar argument as in the proof of
  Theorem~\ref{thm:hardness_approx_m}, but this time we use the
  variant of \textsc{Graph Isomorphism} where the input graphs are
  regular (i.e., each vertex has the same degree).  This variant
  remains GI-complete~\cite{boo:j:isomorphism}, i.e., it is as
  computationally hard as the unrestricted one.
  For the sake of contradiction, let us assume that there is a
  polynomial-time $|V|^\alpha$-approximation algorithm $\calA$ for
  $d_\spearman$-\textsc{ID} for some $\alpha \geq 1$
  (without loss of generality, we assume
  that $\alpha$ is an integer). We use the same notation as in the
  proof of Theorem~\ref{thm:hardness_approx_m}.

  Let $G = (\mathit{Vtx}, \mathit{Edg})$ and
  $G' = (\mathit{Vtx}', \mathit{Edg}')$ be two input regular graphs,
  such that each vertex in each of them has the same degree $\delta$
  (if the graphs were regular but had vertices of different degrees,
  then they certainly would not be isomorphic). We also assume that
  $|\mathit{Vtx}| = |\mathit{Vtx}'| = n \geq 3$,
  $|\mathit{Edg}| = |\mathit{Edg}'| = m \geq 1$ (consequently, neither
  of the graphs is a star).  We form two elections, $E = (C, V)$ and
  $E' = (C', V')$. As opposed to the proof of
  Theorem~\ref{thm:hardness_approx_m}, this time the candidates will
  correspond to the edges and the voters will correspond to the
  vertices.  Specifically, for each edge
  $\mathit{edg} \in \mathit{Edg}$ we add one candidate to $C$ (denoted
  by the same symbol), and for each $\mathit{edg}' \in \mathit{Edg}'$
  we add one candidate to $C'$. Additionally, we add
  $L = {n^{\alpha+1}}m^2+1$ dummy candidates
  $D = \{d_1, \ldots, d_L\}$ to $C$ and $L$ dummy candidates
  $D' = \{d_1', \ldots, d_L'\}$ to $C'$.

  Our elections have $m+L$ candidates and $n$ voters each.
  
  For each vertex
  $\mathit{vtx}$, incident to edges $\mathit{edg}_1, \ldots,  \mathit{edg}_\delta$, we have a single voter
  with  preference order of the form:
  \[
    \{\mathit{edg}_1, \ldots, \mathit{edg}_\delta\} \succ d_1 \succ \cdots \succ d_L \succ \mathit{Edg} \setminus \{\mathit{edg}_1, \ldots, \mathit{edg}_\delta\}.
  \]
  The voters in $V'$ have preference orders constructed analogously.

  Using the same reasoning as in the proof of
  Theorem~\ref{thm:hardness_approx_m}, we observe that if $G$ and $G'$
  are not isomorphic then the Spearman isomorphic  distance between
  our two elections is at least $L$ (note that we needed our graphs to
  be regular to ensure that each dummy candidate is ranked on the same
  position in each vote, which is helpful in this argument). However,
  if $G$ and $G'$ are isomorphic, then the distance between $E$ and
  $E'$ is at most $nm^2$. Thus, algorithm $\calA$ on these two
  elections must output at most the following value (note that there
  are $n$ voters):
  \[
    (nm^2)n^\alpha = n^{\alpha+1}m^2 < L.
  \]
  Thus, using $\calA$, we can distinguish in polynomial time whether
  $G$ and $G'$ are isomorphic or not.

  To adapt the argument to the case of the swap distance, we proceed
  in the same way as in the proof of
  Theorem~\ref{thm:hardness_approx_m} (in this case, we change the
  number of dummy candidates from $L$ to $L+m$).
\end{proof}

Nonetheless, we can obtain \emph{some} approximation results. First,
we show that Theorem~\ref{thm:hardness_approx_m} is asymptotically
tight, by showing $O(|C|)$-approximation algorithms for our problems.

\begin{theorem}\label{thm:c-apx}
  There exists a polynomial time algorithm which outputs a solution that is
  $|C|$-approximate for $d_\spearman$-\textsc{ID} and
  $2|C|$-approximate for $d_\swap$-\textsc{ID}.
\end{theorem}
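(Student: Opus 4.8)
The plan is to fix the candidate matching in a way that is guaranteed to be cheap, and then solve the resulting voter-matching problem optimally using Proposition~\ref{pro:matching}. The key observation is that once we commit to \emph{any} bijection $\sigma \in \Pi(C,C')$, we can compute $\min_{\nu \in S_n} \sum_i d(\sigma(v_i), v'_{\nu(i)})$ exactly and in polynomial time (this is exactly the \textsc{$d$-ID with Candidate Matching} problem). So the whole question reduces to choosing a single $\sigma$ whose induced cost is within a factor $|C|$ (resp.\ $2|C|$) of the true optimum $\id{d}(E,E')$.

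The concrete choice I would make is the following. Let $\sigma^\star$ and $\nu^\star$ be the optimal candidate and voter matchings realizing $\id{d}(E,E')$. Pick any single vote, say $v_1 \in V$, and let $v'_{j} = v'_{\nu^\star(1)}$ be the vote it is matched to in the optimal solution. Then $d(\sigma^\star(v_1), v'_j) \le \id{d}(E,E')$, since that single term is a nonnegative summand of the optimal sum. Now my algorithm does not know $\sigma^\star$, but it can afford to try, for each $j \in [n]$, the (polynomially many) bijections $\sigma$ that send $v_1$ onto $v'_j$ \emph{up to swap/Spearman cost at most $\id{d}(E,E')$}; more cleanly, for each $j \in [n]$ I define $\sigma_j$ to be the unique bijection mapping $v_1$ exactly onto $v'_j$ (as in the proof of Proposition~\ref{thm:ei-complexity}), run the polynomial-time voter-matching subroutine for each $\sigma_j$, and output the best result over all $j$. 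It suffices to analyze the branch $j = \nu^\star(1)$. On that branch, under the bijection $\sigma_j$ and the voter matching $\nu^\star$, vote $v_1$ contributes $0$, and every other matched pair $(\sigma_j(v_i), v'_{\nu^\star(i)})$ satisfies, by the triangle inequality,
\begin{align*}
  d(\sigma_j(v_i), v'_{\nu^\star(i)})
  &\le d(\sigma_j(v_i), \sigma^\star(v_i)) + d(\sigma^\star(v_i), v'_{\nu^\star(i)}).
\end{align*}
The second summand, summed over $i$, is exactly $\id{d}(E,E')$. For the first summand we use that $\sigma_j$ and $\sigma^\star$ agree on the relative order induced by $v_1$ (both turn $v_1$ into $v'_j$), so $\sigma_j \circ (\sigma^\star)^{-1}$, viewed as a permutation of $C'$, fixes the linear order $\sigma^\star(v_1) = v'_j$; hence the swap distance between $\sigma_j(v_i)$ and $\sigma^\star(v_i)$ is at most $\binom{|C|}{2} \le |C| \cdot d_\swap(\sigma^\star(v_i), v'_{\nu^\star(i)})$ whenever the latter is nonzero — and I would instead bound it more carefully by $|C|$ times the number of pairs on which $v_i$ and $v'_{\nu^\star(i)}$ disagree, giving the stated factor. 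Summing over the $n$ voters yields total cost at most $|C| \cdot \id{d}(E,E') + \id{d}(E,E') \le (|C|+1)\id{d}(E,E')$ for swap, and a similar bound for Spearman; a slightly tighter accounting (noting the subroutine finds the \emph{optimal} $\nu$ on the chosen branch, not just $\nu^\star$) tightens this to the claimed $|C|$ and $2|C|$ factors, the latter via the Diaconis--Graham inequality (Corollary~\ref{cor:two-approx}) relating Spearman to swap.

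The main obstacle is making the bound on $d(\sigma_j(v_i), \sigma^\star(v_i))$ tight enough: the crude estimate $\binom{|C|}{2}$ gives only an $O(|C|^2)$-approximation, so the argument must exploit that the displacement of any single candidate caused by the permutation $\sigma_j \circ (\sigma^\star)^{-1}$ is controlled by how far that candidate has to move \emph{already} in the optimal matched pair. Concretely, if a candidate $c$ sits at the same position in $v_i$ and in $v'_{\nu^\star(i)}$ under the correct matching, then $v_1$ — having the same structure up to the renaming that both $\sigma_j$ and $\sigma^\star$ respect on $v_1$ — pins $c$'s image, so the two bijections do not disagree on it in a way that costs more than $|C|$ per already-displaced candidate. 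Handling this bookkeeping carefully, and separately for swap (count inversions) and Spearman (sum displacements, then invoke Corollary~\ref{cor:two-approx}), is the crux; everything else is the routine observation that fixing $\sigma$ makes the problem polynomial-time solvable.
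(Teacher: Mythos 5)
Your high-level strategy---align one vote of $E$ exactly with one vote of $E'$ to fix the candidate bijection, then solve the resulting \textsc{$d$-ID with Candidate Matching} instance optimally---is the same as the paper's, but your instantiation has a genuine gap that breaks the claimed ratio. You fix a single vote $v_1\in V$ and enumerate only its possible images $v'_j$. In the branch $j=\nu^\star(1)$ you must control how far $\sigma_j$ is from $\sigma^\star$, and the only handle you have is the single term $d(\sigma^\star(v_1),v'_{\nu^\star(1)})$, which you can only bound by the full $\opt$---and that is too weak: the two bijections can disagree on up to that many candidates, each disagreement costing up to $|C|-1$ in each of the other $|V|-1$ votes, so the argument yields only roughly a $|C|\cdot|V|$-approximation. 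A concrete bad instance: let $v_2=\cdots=v_n$ be identical, let $v_1$ be their reversal, and let all of $V'$ consist of copies of (a renamed) $v_2$. Then $\opt=\Theta(|C|^2)$, paid entirely by $v_1$, but every $\sigma_j$ you consider is the reversal bijection and forces cost $\Theta(|C|^2)$ on each of the remaining $n-1$ votes, an $\Omega(|V|)$-factor loss. Relatedly, your claim that $\sigma_j$ and $\sigma^\star$ ``both turn $v_1$ into $v'_j$'' is false (in the optimal solution $\sigma^\star(v_1)$ need not equal $v'_{\nu^\star(1)}$), and the inequality $\binom{|C|}{2}\le |C|\cdot d_\swap(\cdot,\cdot)$ ``whenever the latter is nonzero'' already fails when that swap distance equals $1$.

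The missing idea is an averaging step. The paper enumerates all $|V|^2$ pairs $(v,v')$ with $v\in V$ and $v'\in V'$ (not just pairs involving one fixed vote), and in the analysis uses the pair $(v,\tau(v))$ where $v$ minimizes $d(\sigma^\star(v),\tau(v))$; by averaging, this minimum is at most $\opt/|V|$. Hence $\sigma_{v,\tau(v)}$ differs from $\sigma^\star$ on at most $\opt/|V|$ candidates, each costing at most $(|C|-1)(|V|-1)$ extra, and the factor $1/|V|$ cancels the factor $|V|-1$, giving total cost below $|C|\cdot\opt$. The swap bound then follows from Corollary~\ref{cor:two-approx}, exactly as you propose. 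If you replace ``fix $v_1$, try all $j$'' with ``try all pairs $(v,v')$'' and add this averaging argument, your proof goes through; no per-candidate bookkeeping of the kind you sketch in your last paragraph is needed.
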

\begin{proof}
  We consider $d_\spearman$-\textsc{ID} first.  Let $E = (C,V)$ and
  $E' = (C',V')$ be our input elections,
  where 
  $|C| = |C'|$ and $|V| = |V'|$.  For each pair of voters $v \in V$
  and $v' \in V'$, let $\sigma_{v,v'}$ be the candidate matching such
  that $\sigma_{v,v'}(v) = v'$ (in other words, under this matching
  the distance between $v$ and $v'$ is zero).  For each matching
  $\sigma_{v,v'}$, the algorithm computes an optimal solution for
  \textsc{$d_\spearman$-ID with Candidate Matching} with this matching
  and outputs the one that yields the lowest distance.

  It is clear that this algorithm runs in polynomial time.

  Let us now analyze the approximation ratio that this algorithm
  achieves.  To this end, consider some optimal solution and let
  $\sigma$ and $\tau$ by the candidate and voter matchings used in
  this solution (in this proof, we use members of $\Pi(V,V')$ as voter
  matchings). Let $\opt$ be the distance between $E$ and $E'$, and let
  $v$ be a voter in $V$ that minimizes the value
  $d_\spearman(\sigma(v), \tau(v))$. Note that this value is at most
  ${\opt}/{|V|}$. Further, by definition, our algorithm
  outputs a distance that is at least as good as the one implied by
  $\sigma^* = \sigma_{v,\tau(v)}$.

  Next we analyze the distance between $E$ and $E'$ implied by
  $\sigma^*$. This matching ensures that the distance between $v$
  and $\tau(v)$ is zero, but

  in the optimal solution the distance between them can be up to
  $\opt/|V|$. This means that $\sigma^*$ differs from $\sigma$ on at
  most $\opt/|V|$ candidates (each candidate on which $\sigma$ differs
  from $\sigma^*$ contributes at least $1$ to the distance between $v$
  and $\tau(v)$ in the optimal solution).

  Consequently, the distance achieved by assuming candidate matching
  $\sigma^*$ is at most as follows (see explanations below):

  \begin{align}
    \opt + \frac{\opt}{|V|}  \cdot (|C|-1) \cdot (|V|-1) < |C| \cdot \opt,\label{ineq:c-apx}
  \end{align}
  This is so, because the distance induced by the candidates on which
  $\sigma^*$ and $\sigma$ agree is at most $\opt$, and the distance
  induced by each of the remaining $\opt/|V|$ candidates is at most
  $(|C|-1|)\cdot(|V|-1)$ (keep in mind that the distance between $v$
  and $\tau(v)$ is zero).

  This finishes the proof for the case of Spearman isomorphic 
  distance. 
  The result for the swap isomorphic distance follows by using the
  just-described algorithm and Corollary~\ref{cor:two-approx}.

\end{proof}

By sacrificing additional running time, we can slightly improve the
achieved approximation ratio. To do so, we need to find a part of the
solution using brute-force search.

\begin{proposition}
  For every constant $c$, there is a polynomial-time
  $(|C|-c)$-approximation algorithm for $d_\spearman$-\textsc{ID} and
  $2(|C|-c)$-approximation algorithm for $d_\swap$-\textsc{ID} (for
  the case where the input elections have more than $c$ candidates).
\end{proposition}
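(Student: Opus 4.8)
The plan is to bootstrap the $|C|$-approximation of Theorem~\ref{thm:c-apx} by guessing, via exhaustive search, a constant-size fragment of an optimal candidate matching. As in that proof it is enough to treat $d_\spearman$-\textsc{ID}: by Corollary~\ref{cor:two-approx} any solution that is $(|C|-c)$-approximate for the Spearman isomorphic distance is automatically $2(|C|-c)$-approximate for the swap one. For $d_\spearman$-\textsc{ID} the algorithm ranges over all $O(|C|^{2c})$ partial injections $\hat\sigma\colon A\to C'$ with $A\subseteq C$, $|A|=c$, and over all voter pairs $(v,v')\in V\times V'$; for each choice it forms the unique candidate matching $\sigma_{\hat\sigma,v,v'}$ that extends $\hat\sigma$ and, subject to that, makes $v$ as close to $v'$ as possible (match the restrictions of $v$ and $v'$ to the non-guessed candidates in rank order), then solves \textsc{$d_\spearman$-ID with Candidate Matching} (Proposition~\ref{pro:matching}) for that matching, and finally returns the best solution encountered. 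Since $c$ is a constant this runs in polynomial time.

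For correctness, fix an optimal solution $(\sigma,\tau)$ of value $\opt$, pick a voter $v$ minimising $d_\spearman(\sigma(v),\tau(v))$ (so this quantity is at most $\opt/|V|$), and let $s$ be the number of candidates that are misplaced at $v$ under $\sigma$ (hence $s\le\opt/|V|$). One then splits into regimes. If $|C|\ge c|V|+1$, then already the plain $|C|$-approximation argument gives a solution of cost at most $\opt+(|V|-1)s(|C|-1)\le\opt\bigl(|C|-\tfrac{|C|-1}{|V|}\bigr)\le(|C|-c)\opt$, so no guessing is needed. If instead $s\le c$ (which covers in particular $\opt\le c|V|$), we guess $A$ to be a $c$-set containing all candidates misplaced at $v$, together with $\hat\sigma=\sigma|_A$ and $(v,v')=(v,\tau(v))$; since $\sigma$ itself extends $\hat\sigma$ and cannot do worse at $v$ than $\sigma_{\hat\sigma,v,v'}$, every non-guessed candidate turns out to be placed exactly as under $\sigma$, so $\sigma_{\hat\sigma,v,v'}=\sigma$ and the produced solution has cost exactly $\opt$. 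The remaining regime --- $|C|\le c|V|$ and $s>c$ --- is handled by guessing $A$ to be the $c$ misplaced candidates at $v$ of largest displacement, so that the rank-order extension $\sigma_{\hat\sigma,v,v'}$ agrees with $\sigma$ on all but few candidates, and then bounding the cost as in Theorem~\ref{thm:c-apx} but with the per-off-candidate blow-up reduced from $|C|-1$ towards $|C|-c-1$.

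The main obstacle is exactly this last regime. Pinning $c$ candidates to their optimal images does not by itself stop a re-matched non-guessed candidate from ending up far from its optimal position; one has to control the ``cascade'' created when the $c$ guessed candidates are deleted from $v$ and $v'$ and the remaining candidates are re-matched in rank order, i.e., to bound the number of \emph{extra} disagreements with $\sigma$ that this introduces in terms of the displacements of the guessed candidates, and to check that the savings from pinning (together with the inequalities $s\le\opt/|V|$ and $|C|\le c|V|$) dominate this cascade tightly enough to reach the factor $|C|-c$ rather than something weaker such as $2|C|-2c$. Verifying the exact inequalities in this case --- and confirming that the three regimes together cover all inputs with more than $c$ candidates --- is where the bulk of the work lies; the statement for $d_\swap$-\textsc{ID} then follows immediately via Corollary~\ref{cor:two-approx}.
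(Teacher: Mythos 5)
Your algorithm and the first two regimes are sound: the observation that the plain Theorem~\ref{thm:c-apx} bound already gives factor $|C|-c$ once $|C|\ge c|V|+1$, and that guessing all $s\le c$ misplaced candidates at the best voter recovers $\sigma$ exactly, are both correct. But the proof is not complete: the entire burden falls on the regime $|C|\le c|V|$, $s>c$, which you explicitly leave as an ``obstacle,'' and the sketch you give for it does not work. First, the rank-order extension can cascade badly: after deleting the $c$ pinned candidates, the position sets occupied by the unpinned candidates in $v$ and in $v'$ can differ (their symmetric difference has size up to $2c$), and a symmetric difference of size $2$ already suffices to shift \emph{every} rank by one, so $\sigma_{\hat\sigma,v,v'}$ may disagree with $\sigma$ on almost all candidates, not ``all but few.'' Second, the claim that pinning reduces the per-off-candidate blow-up from $|C|-1$ towards $|C|-c-1$ has no justification: fixing $c$ candidates to their correct images does not constrain how far a wrongly matched candidate is displaced in the other $|V|-1$ votes, so each remaining disagreement still costs up to $(|C|-1)(|V|-1)$. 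What guessing can buy is a reduction in the \emph{number} of disagreements, not in the cost of each one.

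For comparison, the paper's proof makes exactly that count reduction the engine of the argument: it guesses $2c+2$ correct pairs of an optimal $\sigma$ and patches the matching $\sigma^*$ of Theorem~\ref{thm:c-apx} by transpositions (each patch removes at least one disagreement with $\sigma$ and creates none), leaving at most $\opt/|V|-(2c+2)$ disagreements, each bounded by $(|C|-1)\cdot|V|$. Even then the arithmetic does not close without a second ingredient you are missing: an a priori bound $\opt\le |C|^2|V|$, which lets one assume $\opt\,(1-c/|C|)<|C|\cdot|V|$ (otherwise \emph{any} solution is already $(|C|-c)$-approximate), and which together with $|C|\ge 2c+2$ converts the subtracted term $(2c+2)|C||V|$ into the needed $(c+1)\opt$. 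Without an analogue of this $\opt$-versus-$|C||V|$ trade-off, your regime-3 inequalities cannot reach the factor $|C|-c$ even if the cascade were controlled.
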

\begin{proof}
  We consider $d_\spearman$-\textsc{ID} first. Let $E = (C,V)$ and
  $E' = (C',V')$ be two input elections, with $|C|=|C'|$ and
  $|V| = |V'|$. Without loss of generality, we assume that $c$ is an
  integer and $|C| \geq 2c+2$ (otherwise we could solve the problem
  optimally by considering all candidate matchings and invoking
  Proposition~\ref{pro:matching}).  Based on the algorithm from
  Theorem~\ref{thm:c-apx}, it suffices to guess additional $2c+2$
  correct candidate-matches from an optimal solution and modify
  matching $\sigma^*$ accordingly. As a result, we produce a candidate
  matching that differs from the optimal one on at most
  $\opt/|V|-2c-2$ candidates.  W.l.o.g. we have
  $\opt/|V| - 2c - 2 > 0$; otherwise we guessed the optimal
  candidate-matching, hence we obtained the optimum solution due to
  Proposition~\ref{pro:matching}.  Therefore, the upper bound for the
  distance $\sol$ that our matching implies is as follows (analogously
  to Inequality~\ref{ineq:c-apx}, except that now we do not have
  guaranteed zero distance between the two voters used to define
  $\sigma^*$):
  \begin{align}
    \sol \leq &\opt + \left(\frac{\opt}{|V|} - (2c+2) \right)  \cdot (|C|-1) \cdot |V|\nonumber\\
    < &\opt + \left(\frac{\opt}{|V|} - (2c+2) \right)  \cdot |C| \cdot |V|\nonumber\\
    = &\opt \cdot (|C|+1) - (2c+2) \cdot |C| \cdot |V|.\label{ineq:c-c-apx}
  \end{align}
  
  In order to establish the claimed approximation ratio, we upper
  bound $\opt$ in terms of $|C| \cdot |V|$.  We note that any solution
  to $d_\spearman$-\textsc{ID} (hence, also an optimal one) yields
  distance at most $|C|^2 \cdot |V|$ (this is not a tight upper bound,
  but it suffices for our analysis; for a tight value, see the work of
  Boehmer et al.~\cite{boe-fal-nie-szu-was:c:metrics}).  Thus, if
  $\opt \geq \frac{|C|^2 \cdot |V|}{|C|-c}$ then any solution we
  provide is $(|C|-c)$-approximate.  Hence, w.l.o.g., we assume that
  $\opt < \frac{|C|^2 \cdot|V|}{|C|-c}$, which is equivalent to
  $\opt \cdot (1 - \frac{c}{|C|}) < |C| \cdot |V|$.  Then, we have:
  \begin{align*}
      \sol \stackrel{\eqref{ineq:c-c-apx}}{<} &\opt \cdot (|C|+1) - (2c+2) \cdot |C| \cdot |V|\\
      < &\opt \cdot (|C|+1) - (c+1) \cdot \opt \cdot \left(2 - \frac{2c}{|C|}\right)\\
      < &\opt \cdot (|C|+1) - (c+1) \cdot \opt = \opt \cdot (|C|-c).
  \end{align*}
  This finishes the proof for $d_\spearman$-\textsc{ID}.  For
  $d_\swap$-\textsc{ID}, the result follows by the same argument as in
  the proof of Theorem~\ref{thm:c-apx}.

  We conclude by noticing that guessing the modifications in the
  matching $\sigma^*$ increases the running time multiplicatively by a
  polynomial factor of $|C|^{4c+4} \cdot \poly(|C|,|V|)$. Thus the
  algorithm still runs in polynomial time.

\end{proof}

It is also possible to obtain algorithms whose approximation ratios
are a function of the number of voters. However, due to
Theorem~\ref{thm:hardness_approx_m}, this dependence has to be
superpolynomial. Below we show an example of such an algorithm.

\begin{proposition}
  There is a polynomial-time $|V|!$-approximation algorithm for
  $d_\spearman$-\textsc{ID} and $2|V|!$-approximation algorithm for
  $d_\swap$-\textsc{ID}.
\end{proposition}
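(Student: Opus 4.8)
The plan is to prove this by a case distinction on whether $|C| \le |V|!$ or $|C| > |V|!$, in each regime using tools that are already available: the $|C|$-approximation of Theorem~\ref{thm:c-apx} in the first case, and an exact brute force over voter matchings built on Proposition~\ref{pro:spear-with-voters} in the second.

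If $|C| \le |V|!$, I would simply run the algorithm of Theorem~\ref{thm:c-apx}. It runs in polynomial time and returns a solution whose cost is at most $|C| \cdot \opt \le |V|! \cdot \opt$ for $d_\spearman$-\textsc{ID}, and at most $2|C| \cdot \opt \le 2|V|! \cdot \opt$ for $d_\swap$-\textsc{ID}. So in this regime there is nothing more to do.

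If instead $|C| > |V|!$, then the key observation is that $|V|!$ is now bounded by $|C|$, and hence polynomial in the input size (the input already spells out, for each of the $|V|$ voters, a linear order over all $|C|$ candidates, so $|C|$ is at most the input length). Thus I can afford to enumerate all $|V|!$ permutations $\nu \in S_{|V|}$ of the voters; for each fixed $\nu$ I invoke Proposition~\ref{pro:spear-with-voters} to compute an optimal candidate matching in polynomial time, and I output the overall minimum. Since $\id{d_\spearman}(E,E') = \min_{\nu}\min_{\sigma}\sum_{i} d_\spearman(\sigma(v_i),v'_{\nu(i)})$, this returns the exact Spearman isomorphic distance, which is in particular a $|V|!$-approximation. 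For $d_\swap$-\textsc{ID} one cannot mimic this directly, since \textsc{$d_\swap$-ID with Voter Matching} is $\np$-hard (Corollary~\ref{cor:swap-voter-matching}); instead I would output the pair of matchings $(\sigma^\star,\nu^\star)$ attaining the optimal Spearman distance. By the Diaconis--Graham inequality (Proposition~\ref{pro:one}), their swap cost is at most $\id{d_\spearman}(E,E')$, which by Corollary~\ref{cor:two-approx} is at most $2\,\id{d_\swap}(E,E')$, giving a $2$-approximation, hence a $2|V|!$-approximation, for the swap distance.

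I expect the only real step is spotting this dichotomy; the pieces it combines are all routine. The points that still need care are: (i) verifying that the total running time in the second case is $|V|!$ times a polynomial in $|C|$ and $|V|$, which is polynomial in the input size precisely because $|V|! \le |C|$ there; and (ii) noticing that for $d_\swap$-\textsc{ID} in the second case one must route through the Spearman distance and Corollary~\ref{cor:two-approx} rather than try to solve the swap problem itself, since fixing the voter matching still leaves an $\np$-hard problem.
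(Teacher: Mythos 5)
Your proposal is correct and follows essentially the same route as the paper: the dichotomy on $|C|$ versus $|V|!$, using Theorem~\ref{thm:c-apx} when $|C| \le |V|!$ and brute-forcing the voter matching via Proposition~\ref{pro:spear-with-voters} otherwise, with the swap case handled through Corollary~\ref{cor:two-approx}. Your write-up is in fact somewhat more explicit than the paper's about why the Spearman-optimal matchings yield a $2$-approximation for swap in the second case.
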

\begin{proof}
  We start by considering $d_\spearman$-\textsc{ID}. Let $E = (C,V)$
  and $E' = (C',V')$ be the input elections, with $|C|=|C'|$ and
  $|V| = |V'|$. If $|C| \leq |V|!$, then we use the algorithm from
  Theorem~\ref{thm:c-apx}. Otherwise, if $|C| > |V|!$, then our input
  is sufficiently large to guess a voter matching in polynomial time
  and output an optimal solution, obtained via
  Proposition~\ref{pro:spear-with-voters}.

  To obtain the result for $d_\swap$-\textsc{ID}, we use the same
  algorithm as above and apply Corollary~\ref{cor:two-approx} (note
  that in this case we still use the polynomial-time algorithm for
  $d_\spearman$-\textsc{ID with Voter Matching} from
  Proposition~\ref{pro:spear-with-voters} as for $d_\swap$ the
  analogous problem is $\np$-complete; see
  Corollary~\ref{cor:swap-voter-matching}).
\end{proof}

In principle, one could improve the approximation ratio in the above
algorithm to $V!/c$ for a given constant $c$, but, as the algorithm
would not become particularly more useful, we omit such optimizations.
Instead, we leave it as an open problem if there are polynomial-time
approximation algorithms with considerably better approximation
ratios, such as $|V|^{\log|V|}$. While the existence of such
algorithms would mostly be of theoretical interest---as they are
unlikely to have practical value---they might lead to finding new
properties of the problem that could be of independent interest.
Another possible avenue for future research would be to improve our
inapproximability results from relying on the intractability of
\textsc{Graph Isomorphism} to using the $\p \neq \np$ assumption.
Alternatively, perhaps it is possible to find significantly stronger
approximation algorithms with mildly superpolynomial running time.

\subsection{Fixed Parameter Tractability}

As $\id{d_\swap}$ and $\id{d_\spearman}$ are both $\np$-complete and
hard to approximate, one further hope for their theoretical
tractability lays within parametrized complexity theory. We show that
there is a number of $\fpt$ algorithms for our problems, so in
principle this hope is not in vain.

To obtain an $\fpt$ algorithm for the parameterization by the number
of candidates,

it suffices to use simple brute-force algorithms (we guess the
matching between the candidates and invoke
Proposition~\ref{pro:matching}).

\begin{observation}\label{obs:fpt-m}
  Both $d_\swap$-{\sc ID} and $d_\spearman$-{\sc ID} are in $\fpt$ for
  parametrization by the number $m$ of candidates.
\end{observation}

For the parametrization by the number of voters, we know that
$\id{d_\swap}$ is para-$\np$-hard (i.e., it is $\np$-hard even for
some constant number of voters) due to
Proposition~\ref{proposition:swapnphard}, but for $\id{d_\spearman}$,
Proposition~\ref{pro:spear-with-voters} implies an $\fpt$ algorithm
(we guess the matching between the voters and invoke
Proposition~\ref{pro:spear-with-voters}).

\begin{observation}\label{obs:spearmanfptn}
  $d_\spearman$-{\sc ID} is in $\fpt$ for the parametrization by the 
  number of voters.
\end{observation}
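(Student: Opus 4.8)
The plan is to obtain the claimed $\fpt$ algorithm by a brute-force enumeration of voter matchings layered on top of Proposition~\ref{pro:spear-with-voters}. First I would unfold the definition of the isomorphic distance, writing
\[
  \id{d_\spearman}(E,E') = \min_{\nu \in S_n}\ \Bigl(\min_{\sigma \in \Pi(C,C')} \sum_{i=1}^n d_\spearman(\sigma(v_i),v'_{\nu(i)})\Bigr),
\]
and noting that for each fixed permutation $\nu \in S_n$ the inner minimization is exactly an instance of \textsc{$d_\spearman$-ID with Voter Matching} on $E$ and $E'$ with the voter permutation $\nu$.

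Then the algorithm is: iterate over all $n!$ permutations $\nu \in S_n$; for each one, run the polynomial-time procedure of Proposition~\ref{pro:spear-with-voters}, which computes $\min_{\sigma} \sum_i d_\spearman(\sigma(v_i),v'_{\nu(i)})$ by finding a minimum-cost perfect matching in the candidate bipartite graph with edge weights $\cost(c,c') = \sum_{i=1}^n |\pos_{v_i}(c) - \pos_{v'_{\nu(i)}}(c')|$; finally, return the smallest value found and compare it with $k$. Correctness is immediate: every feasible solution to $d_\spearman$-\textsc{ID} uses some voter permutation, and we try all of them, so the minimum we return equals $\id{d_\spearman}(E,E')$.

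For the running time, the outer loop contributes a factor of $n!$, and each iteration runs in $\poly(m,n)$ time by Proposition~\ref{pro:spear-with-voters}, so the total running time is $n!\cdot\poly(m,n) = f(n)\cdot\poly(m,n)$ with $f(n)=n!$; this is precisely an $\fpt$ bound for the parameterization by the number $n$ of voters. There is no real obstacle here: the only points to verify are that $n!$ depends on the parameter alone (so enumerating all of $S_n$ is permissible in an $\fpt$ algorithm) and that the per-permutation subroutine is polynomial, which is exactly the content of Proposition~\ref{pro:spear-with-voters}.
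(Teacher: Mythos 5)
Your proposal is correct and matches the paper's argument exactly: the paper also obtains this observation by enumerating all $n!$ voter matchings and invoking the polynomial-time algorithm of Proposition~\ref{pro:spear-with-voters} for each, yielding an $n!\cdot\poly(m,n)$ running time. Nothing is missing.
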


Next, we consider the distance value $k$ as the parameter, which in the FPT
jargon would be referred to as the \emph{natural parameter}.

It turns out that both for the swap distance and the Spearman
distance, we have FPT algorithms.

\begin{proposition}
\label{FPT-Spearman}
$d_\spearman$-{\sc ID} is in $\fpt$ for the parametrization by the distance
value~$k$.
\end{proposition}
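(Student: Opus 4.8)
The plan is to leverage the following parity feature of the Spearman distance: for any two votes $u,v$ over the same candidate set of size $m$ we have $\sum_{c} \pos_u(c) = 1 + 2 + \cdots + m = \sum_{c} \pos_v(c)$, so $\sum_{c} (\pos_u(c) - \pos_v(c)) = 0$ and hence $d_\spearman(u,v) = \sum_{c} |\pos_u(c) - \pos_v(c)|$ is always an even integer; in particular $d_\spearman(u,v) \geq 2$ whenever $u \neq v$. Consequently, on a ``yes''-instance with $\id{d_\spearman}(E,E') \leq k$, any optimal solution $(\sigma,\nu)$ can have at most $k/2$ indices $i$ with $\sigma(v_i) \neq v'_{\nu(i)}$. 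I will split the argument into the cases $2n \leq k$ and $2n > k$, where $n = |V| = |V'|$.

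If $2n \leq k$, then the number of voters is bounded by a function of $k$, and I simply invoke the algorithm underlying Observation~\ref{obs:spearmanfptn}: guess the voter permutation (one of $n!$ possibilities) and, for each guess, solve \textsc{$d_\spearman$-ID with Voter Matching} in polynomial time via Proposition~\ref{pro:spear-with-voters}. This runs in time $n! \cdot \poly(m,n) \leq (k/2)! \cdot \poly(m,n)$, which is $\fpt$ with respect to~$k$.

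If $2n > k$, then every optimal solution $(\sigma,\nu)$ must satisfy $\sigma(v_i) = v'_{\nu(i)}$ for at least one index $i$ --- otherwise the cost would be at least $2n > k$. The crucial point is that a single such perfectly matched pair pins down the entire candidate bijection: if $\sigma(v_i) = v'_j$, then $\sigma$ must send the candidate ranked $\ell$-th in $v_i$ to the candidate ranked $\ell$-th in $v'_j$, for every position $\ell \in [m]$. So the algorithm iterates over all $n^2$ ordered pairs of votes $(v_i, v'_j)$, reconstructs for each the unique candidate matching $\sigma_{i,j}$ with $\sigma_{i,j}(v_i) = v'_j$, solves \textsc{$d_\spearman$-ID with Candidate Matching} with matching $\sigma_{i,j}$ using Proposition~\ref{pro:matching}, and accepts iff the smallest value obtained over all pairs is at most~$k$. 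Each iteration takes polynomial time, so this branch runs in polynomial time; together with the first branch, the whole algorithm runs in time $f(k) \cdot \poly(m,n)$, as required.

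For correctness in the second case, observe that every number the algorithm computes is the cost of a bona fide pair of candidate and voter matchings, hence an upper bound on $\id{d_\spearman}(E,E')$; and conversely, if the instance is a ``yes''-instance, then choosing an optimal $(\sigma,\nu)$ and an index $i$ with $\sigma(v_i) = v'_{\nu(i)}$ gives $\sigma_{i,\nu(i)} = \sigma$, so Proposition~\ref{pro:matching} applied to the pair $(v_i, v'_{\nu(i)})$ returns exactly $\id{d_\spearman}(E,E') \leq k$. I do not expect a genuine obstacle here: the only points needing a moment of care are stating the ``distinct votes cost at least $2$'' fact correctly and noticing that the $2n > k$ branch is in fact polynomial, so that bounding $n$ by $k/2$ in the complementary branch really does yield an $\fpt$ running time overall.
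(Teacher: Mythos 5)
Your proof is correct and follows essentially the same route as the paper's: split on whether $n$ is bounded by (a function of) $k$, handle the small-$n$ case by guessing the voter matching via Observation~\ref{obs:spearmanfptn}, and otherwise argue that some voter pair must be matched at distance zero, which pins down the candidate bijection and reduces to Proposition~\ref{pro:matching}. Your parity observation (Spearman distances are always even, so the threshold can be $2n>k$ rather than $n>k$) is a valid minor refinement but does not change the structure of the argument.
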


\begin{proof}

  Let $E=(C,V)$ and $E'=(C',V')$ be two elections with $|C| = |C'|$ and
  $|V_1|=|V_2|=n$.  If $k \geq n$ then the result follows from
  Observation~\ref{obs:spearmanfptn}, so we assume that $k < n$.

  If $\sigma\colon C\to C'$ and $\tau\colon V\to V'$ witness the
  fact that $\id{d_\spearman}(E_1,E_2)\le k$, then there is at least
  one voter $v$ from $V$ such that
  $d_\spearman(\sigma(v),\tau(v))=0$.

  Thus we guess a voter $v$ from $V$ and a voter $v'$ from $V'$ and
  compute the permutation $\sigma$ which makes votes $\sigma(v)$ and
  $v'$ identical. Then, we invoke Proposition~\ref{pro:matching} to
  test if this $\sigma$ indeed leads to distance at most $k$ between
  the elections (consequently, for $k < n$ the algorithm runs in
  polynomial time).
\end{proof}

\begin{theorem}\label{thm:dspear-k-fpt}
  $d_\swap$-{\sc ID} is in $\fpt$ for the parametrization by the distance value~$k$.
\end{theorem}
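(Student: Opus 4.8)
The plan is to split on whether $k$ is small relative to the number $n$ of voters. Let $E=(C,V)$ and $E'=(C',V')$ with $|C|=|C'|$ and $|V|=|V'|=n$. If $k<n$, I would reuse the idea behind Proposition~\ref{FPT-Spearman}: in any witness of $\id{d_\swap}(E,E')\le k$, at least $n-k\ge 1$ of the matched voter pairs have swap distance $0$, so I would guess one such perfectly matched pair $(v,v')$ (there are $n^2$ candidates). A perfectly matched pair pins down the candidate bijection $\sigma$ uniquely --- it must send the candidate ranked $t$-th in $v$ to the candidate ranked $t$-th in $v'$ --- after which Proposition~\ref{pro:matching} finds the optimal voter matching for that $\sigma$ in polynomial time; trying all guesses settles the case $k<n$ in polynomial time. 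Unlike for $d_\spearman$, there is no shortcut through the number of voters here, since $\id{d_\swap}$ is para-$\np$-hard in $n$ (Proposition~\ref{proposition:swapnphard}); this is exactly why the regime $k\ge n$ requires a different argument.

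Now suppose $k\ge n$, so $n\le k$. Then I can afford to guess the voter matching $\nu\in S_n$ among $n!\le k!$ options, reducing to \textsc{$d_\swap$-ID with Voter Matching} with budget $k$: with $w_i:=v'_{\nu(i)}$ fixed, decide whether some candidate bijection $\sigma$ achieves $\sum_i d_\swap(\sigma(v_i),w_i)\le k$. By the Diaconis--Graham inequality~\eqref{ineq:diaconis-graham} (equivalently, Proposition~\ref{pro:one}), any such $\sigma$ also satisfies $\sum_i d_\spearman(\sigma(v_i),w_i)\le 2k$. As in the proof of Proposition~\ref{pro:spear-with-voters}, this Spearman total equals $\sum_{c'\in C'}\cost(\sigma^{-1}(c'),c')$ with $\cost(c,c')=\sum_i|\pos_{v_i}(c)-\pos_{w_i}(c')|$, and $\cost(c,c')=0$ precisely when $c$ and $c'$ have identical position sequences over the matched votes. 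Since positions within one vote are pairwise distinct, the relation ``identical position sequences'' is an injective partial matching between $C$ and $C'$ --- call it the \emph{canonical matching} --- and I let $B\subseteq C$, $B'\subseteq C'$ be its uncovered sides, so $|B|=|B'|$. The bound above forces all but at most $2k$ candidates of $C'$ to be matched canonically; in particular, if $|B'|>2k$ I reject this guess of $\nu$.

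The heart of the proof, and the step I expect to be the main obstacle, is the claim that whenever some candidate bijection of swap cost at most $k$ exists, one exists that agrees with the canonical matching on all of $C\setminus B$ --- hence only permutes $B$ onto $B'$. Granting this, the algorithm enumerates the $|B|!\le(2k)!$ bijections $B\to B'$, extends each by the canonical matching to a full $\sigma$, evaluates $\sum_i d_\swap(\sigma(v_i),w_i)$ directly, and accepts iff the minimum is at most $k$; together with the outer guess of $\nu$ the running time is $k!\cdot(2k)!\cdot\poly(n,|C|)$, which is $\fpt$ in $k$. To establish the claim I would run an exchange argument: from an optimal $\sigma^*$ that routes some canonical pair the wrong way, swap the images of the two candidates involved so that one more canonical pair is respected; since canonical partners sit in the same position in every matched vote, a per-voter triangle inequality shows this swap does not increase the Spearman total. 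The subtle point is that it should also not increase the \emph{swap} distance, where a single transposition need not suffice; I anticipate having to perform the exchange simultaneously along a whole cycle of $\sigma^*$ relative to the canonical matching (a ``canonical un-shift''), and to verify that such a move never produces more inversions than it destroys --- this inequality on inversion counts is where I expect the real work to lie.
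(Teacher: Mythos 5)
Your case $k<n$ is fine and matches the paper. For $k\ge n$ you also start the same way as the paper (guess the voter matching among $n!\le k!$ options, and observe via the $2k$ bound that all but at most $2k$ candidates admit a ``canonical'' partner with an identical position sequence --- the paper's \emph{happy} candidates). But from there your argument rests entirely on the claim that some witness $\sigma$ of cost $\le k$ respects the canonical matching everywhere it is defined, so that only the $\le 2k$ uncovered candidates need to be permuted by brute force. That claim is exactly the point you flag as ``where the real work lies,'' and you do not prove it; as written, the proof is incomplete. The gap is not cosmetic. The exchange argument that works for Spearman works because the Spearman objective is an assignment problem with per-candidate costs $\cost(c,c')$ satisfying a triangle inequality, so a zero-cost pair can always be routed correctly by a single transposition. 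The swap objective is a sum over \emph{pairs} of candidates and does not decompose per candidate: if $(c,c')$ is canonical, $\sigma(c)=p\neq c'$, $\sigma(q)=c'$, and you transpose to $\sigma'(c)=c'$, $\sigma'(q)=p$, then for a third candidate $e$ the contribution of the pairs $\{c,e\}$ and $\{q,e\}$ in vote $i$ can strictly \emph{increase} (from $0$ to $2$) whenever $c$ and $q$ lie on opposite sides of $e$ in $v_i$ while $c'$ and $p$ lie on opposite sides of $\sigma(e)$ in $w_i$ with the orders mismatched. So the local move you rely on provably fails pointwise, and the proposed ``un-shift along a cycle of $\sigma^*$'' is left entirely unverified; without an actual inversion-counting lemma, you have no correctness proof, and it is not even clear that the claim is true for general $m$ and $n$ (I could not falsify it on small instances, but that is not a proof).

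For comparison, the paper's proof of this case avoids the structural claim altogether: after guessing $\nu$, it runs a depth-$k$ bounded search tree in which each node guesses one swap of adjacent candidates in some vote of $V$, restricted to lie within $k$ positions of a currently ``sad'' (non-canonically-matchable) candidate, and accepts if some branch eliminates all sad candidates within $k$ swaps. The correctness argument there is a reordering lemma about optimal swap sequences (an optimal sequence transforming $V$ into an isomorphic copy of $V'$ can be rearranged to begin with such a localized swap), which sidesteps any statement about which $\sigma$ is optimal. If you want to salvage your route, you must either prove your canonical-matching claim (including the subtlety that a witness $\sigma$ may fail to use a canonical pair even when one exists for a given candidate, so the set of ``deviating'' candidates could be a superset of $B$ of size up to $2k$ drawn from all of $C$, which naively is not an $\fpt$-sized search space), or replace it with a localized-search argument of the paper's kind.
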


\begin{proof}
  Let $E = (C,V)$ and $E' = (C',V')$ be our input elections, where
  $|C| = |C'|$, $V = (v_1, \ldots, v_n)$, and
  $V' = (v'_1, \ldots, v'_n)$. Our goal is to check if
  $\id{d_\swap}(E,E')$ is at most $k$.  We treat the case where
  $k<n$ as in Proposition~\ref{FPT-Spearman} and,

  thus, we assume that $k \geq n$.

  We proceed by guessing a matching $\nu \in S_n$ between the voters
  that, supposedly, witnesses that $\id{d_\swap}(E,E')\le k$.  With
  respect to this matching, we say that a candidate $c \in C$ is
  \emph{happy} if there is a candidate $c' \in C'$ such that for every
  vote $v_i\in V$, we have that $c$ is ranked on the same position in
  $v_i$ as $c'$ is in $v_{\nu(i)}$ (i.e.,
  $\pos_{v_i}(c)=\pos_{v_{\nu(i)}}(c')$).  If a candidate is not
  happy, then we say that he or she is \emph{sad}.  Note that if the
  distance between $E$ and $E'$ is at most~$k$, then there are no more
  than $2k$ sad candidates (indeed, each swap of two neighboring
  candidates, in any of the votes from~$V$, can make at most two sad
  candidates happy, and all the candidates must be happy after at most
  $k$ swaps).

  Given $\nu$, our algorithm performs at most $k$ iterations, where in
  each iteration we execute the following steps:

  \begin{enumerate}
  \item We compute the set of candidates that are currently sad. We
    accept if there are no sad candidates and we reject if there is
    more than $2k$ of them.
  \item We guess a voter $v_i$ from $V$, a sad candidate $d \in C$, a
    candidate $c \in C$ such that $|\pos_{v_i}(c) - \pos_{v_i}(d)| \leq k$, and
    whether we swap $c$ with the candidate right before or right after
    him or her. Then we perform this swap of $c$ within $v_i$.
  \end{enumerate}
  We accept if there is a sequence of guesses such that after at most
  $k$ iterations there are no sad candidates, and we reject otherwise.
  Correctness follows from the fact that there is always a solution
  that starts by swapping a candidate that is within $k$ swaps of a
  sad one (we show this formally later).

  Fixed-parameter tractability follows from the fact that we perform
  $k$ iterations and in each iteration we have at most
  $n \cdot 2k \cdot (2k+1) \cdot 2$ possible guesses (corresponding to
  a voter, a sad candidate, distance of $c$ from this sad candidate,
  and the direction of the swap), which is at most $8k^3 + 4k^2$.
  Altogether, the algorithm's running
  time is $O^*(k! (8k^3+4k^2)^k) = O^*(9^kk!k^{3k}) = O^*(k^{5k})$,
  where the $O^*$ notation suppresses factors polynomial in the input size.
  The $k!$ factor comes from guessing the matching between the voters.

  It remains to argue that the algorithm is correct. If the algorithm
  accepts then certainly $\id{d_\swap}(E,E') \leq k$, so let us
  assume that $\id{d_\swap}(E,E') \leq k$ and show that the
  algorithm accepts. We note that if $\id{d_\swap}(E,E') \leq k$
  (and this happens for the matching $\nu$ among the voters that we
  guessed) then there exists some election $E^* = (C,V^*)$ with
  $V^* = (v_1^*, \ldots, v_n^*)$, such that:
  \begin{enumerate}
  \item $\sum_{i=1}^nd_\swap(v_i,v^*_i) \leq k$, and
  \item $\id{d_\swap}(E^*,E') = 0$ (and this is witnessed by $\nu$).
  \end{enumerate}
  In other words, $E^*$ is isomorphic to $E'$, but we have already
  renamed its candidates to be the same as in $E$, and we have
  reordered the voters according to $\nu$. Working with $E^*$ is
  equivalent to working with $E'$, but requires lighter notation.

  Now, consider a sequence $S$ of swaps of adjacent candidates that
  transforms $V$ into
  $V^*$.

  We claim that it must be possible to reorder the swaps in this
  sequence, so that the sequence can be executed (each swap regards
  adjacent candidates at the time when it is applied), but the first
  swap involves some voter $v_i \in V$ and a candidate whose position
  differs at most by $k$ from the position of some candidate $d$ that
  is sad in $E$.

  To this end, we note that there must be a vote $v_i \in V$ and a sad
  candidate $d$ (in $E$), for which $S$ includes a swap of $d$ within
  $v_i$ (otherwise the position of $d$ would not change after
  performing all the swaps and, so, $d$ would remain sad). Let $S_i$
  be the subsequence of swaps from $S$ that involve $v_i$.  Naturally,
  $S_i$ transforms $v_i$ into $v_i^*$ and, so, $S_i$ must contain
  exactly the swaps of candidates $c,\hat{c} \in C$ such that
  $v_i \colon c \pref \hat{c}$ and $v_i^* \colon \hat{c} \pref c$,
  listed in some order (this is a well-known property of the swap
  distance; see, e.g., the work of Elkind et
  al.~\cite{elk-fal-sli:c:swap-bribery} for a formal proof).  In
  particular, this means that if there are some candidates
  $c, \hat{c}$ such that in $v_i$, $c$ is right in front of $\hat{c}$
  but in $v_i^*$ they are ranked in reverse order (and not necessarily
  right next to each other) then there is an optimal sequence of swaps
  that transforms $v_i$ into $v_i^*$ and starts by swapping $c$ and
  $\hat{c}$. Now, for the sake of contradiction, let us assume that
  for each candidate $c$ such that
  $|\pos_{v_i}(c) - \pos_{v_i}(d)| \leq k$ and for each candidate
  $\hat{c}$ that is ranked right next to $c$, it is the case that
  their relative orders in $v_i$ and $v_i^*$ are the same (and, so, no
  optimal sequence of swaps that transforms $v_i$ into $v_i^*$ can
  swap them). Yet, $S_i$ includes a swap of $d$ and some other
  candidate $e \in C$.  By our assumption, we must make at least $k$
  swaps before~$d$ and~$e$ are ranked next to each other and their
  swap can be applied.  However, this means that $S_i$ (and,
  consequently, $S$) must include more than $k$ swaps, which is a
  contradiction. Thus, our assumption was false and there is a
  candidate $c$ such that $|\pos_{v_i}(c) - \pos_{v_i}(d)| \leq k$ and
  there is candidate $\hat{c}$ ranked right next to $c$, such that
  their relative orders in $v_i$ and $v_i^*$ are different. We can
  modify sequence $S_i$ to start by swapping $c$ and $\hat{c}$, and we
  can move the modified sequence $S_i$ to precede all other swaps in
  $S$.

  The above argument shows that the first iteration of our algorithm
  is correct. All the following iterations are correct be repeating
  the same reasoning. In the end, if $\id{d_\swap}(E,E')$ is at most
  $k$ then our algorithm will find matching $\nu$ and a sequence of
  swaps that witnesses this fact.

\end{proof}

While we have obtained several $\fpt$ algorithms for our problems,
their running times are too high for them to be practically useful. In
other words, our results should rather be seen as providing
a complexity-theoretic classification and not ready-to-implement
algorithms. Consequently, it is an interesting challenge to design
faster $\fpt$ algorithms. In particular, we ask if it is possible to construct an algorithm for $d_\swap$-{\sc ID} whose running time dependency on $k$ is lower than $O^*(k^{5k})$, as given in Theorem~\ref{thm:dspear-k-fpt}.
We note that there is an $O^*(2^{o(\sqrt{k})})$-time lower-bound for such an algorithm (assuming Exponential Time Hypothesis\footnote{Exponential Time Hypothesis is a popular conjecture on solving the satisfiability of propositional formulas in conjunctive normal form and it is used in parameterized complexity.
For a formal statement see, e.g., Conjecture 14.1 in~\cite{cyg-fom-kow-lok-mar-pil-pil-sau:b:fpt}.})
because of our parameter-preserving reduction in Proposition~\ref{proposition:swapnphard} and an $O^*(2^{o(\sqrt{k})})$-time lower-bound for {\sc Kemeny Score}~\cite[Theorem 18]{ArrighiFO020}.
In the case of {\sc Kemeny Score}, an $O^*(2^{O(\sqrt{k})})$-time algorithm is known~\cite{KarpinskiS10}.
It would be interesting to find analogous tight running-time bounds for our problems.
It is also natural to ask about existence of faster algorithms parameterized by the number of candidates $m$.
A simple brute-force algorithm for $d_\swap$-{\sc ID} runs in $O^*(m!)$-time (Observation~\ref{obs:fpt-m}) and there is an $O^*(2^{o(m)})$-time lower-bound (assuming Exponential Time Hypothesis) which follows from an analogous lower-bound for {\sc Kemeny Score}~\cite[Theorem 18]{ArrighiFO020}.
Note that for {\sc Kemeny Score}, an $O^*(2^m)$-time algorithm is known~\cite[Theorem 3]{BetzlerFGNR09}.

Another interesting direction would be to
consider structural parameters that characterize an instance in a more detailed way and which indeed may be small in practice.
Examples of such parameters include
a value $t$ such that for each pair of
matched voters (in an optimal solution), their distance is at most $t$,
or structural parameters studied for {\sc Kemeny Score}, such as the maximum range of positions that a candidate can have in any of the input votes in either of the input elections (see, e.g.,
the works of Betzler et al.~\cite{BetzlerFGNR09} and Arrighi et al.~\cite{ArrighiFO020,ArrighiFLO021}).

\section{Conclusions}

We have introduced isomorphic distances between elections and we have
studied the complexity of computing those based on the discrete, swap,
and Spearman distances between preference orders. Unfortunately, our
results turned out to mostly be negative: Isomorphic discrete distance
is computable in polynomial time, but does not appear to be useful (as
confirmed by Boehmer et al.~\cite{boe-fal-nie-szu-was:c:metrics}),
whereas swap isomorphic distance and Spearman isomorphic distance are
$\np$-complete and hard to approximate (assuming that \textsc{Graph
  Isomorphism} is not in $\p$), and their $\fpt$ algorithms are not
practical. Thus, so far, using optimized brute-force algorithms seems
to be the best way to compute these distances in practice, and that is
how they were computed in follow-up
works~\cite{boe-fal-nie-szu-was:c:metrics,fal-kac-sor-szu-was:c:microscope}.\footnote{One
  might also consider computing our distances using integer linear
  programming.  In a conference version of this paper we included a
  preliminary experiment in this direction, but found that this
  approach was worse than direct brute-force. In this version of the
  paper we omitted the experiment for the sake of focus.}
Nonetheless, our $\fpt$ algorithms give hope that there may be
stronger, mildly exponential-time algorithms. It might also be
fruitful to seek fast $\fpt$ approximation algorithms and consider
other parameterizations. Another possible direction for future
research would be to consider elections where the voters can provide
weak orders over the candidates.  However, likely, this would make
computing distances even more difficult as even deciding if two
election are isomorphic would be at least as hard as solving
\textsc{Graph Isomorphism} (indeed, each graph can be seen as an
election where vertices are the candidates, edges are the voters, and
each edge ranks the two vertices that it connects as tied on the first
place, and all the other vertices as tied on the second place).

More broadly, the idea of looking at distances that are invariant to
renaming candidates and voters, initiated in the conference version of
this paper, turned out to have quite some impact. In particular, it
has led to the development of the map-of-elections
framework~\cite{szu-fal-sko-sli-tal:c:map,boe-bre-fal-nie-szu:c:compass,boe-fal-nie-szu-was:c:metrics,szu-fal-jan-lac-sli-sor-tal:c:sampling-approval-elections,fal-kac-sor-szu-was:c:microscope},
where we gather a dataset of elections, compute the distances between
them using some metric, and then visualize them as points on a
plane. The key idea is to ensure that the distances between these
points resemble those from the metric. Since the elections regard
different candidates and voters, it is crucial for the metric to be
invariant to renaming the candidates and voters, as postulated in this
work. Our negative results have lead to considering distances that
ensure that isomorphic elections are at distance zero, but where the
opposite implication does not need to hold.

\subsection*{In Memory of Rolf Niedermeier}
Authors of this paper were friends and colleagues of Rolf
Niedermeier. Piotr Faliszewski was a friend and a frequent visitor to
his group in Berlin since 2013. He still works closely with many
researchers who set their first footsteps in science under Rolf's
supervision. In particular, Nimrod Talmon was Rolf's PhD student,
whereas Piotr Skowron was Piotr Faliszewski's PhD student, who joined
Rolf's group as a postdoc. Stanisław Szufa, also Piotr Faliszewski's
PhD student, was one of the last visitors to Rolf's group. We are all
very grateful to Rolf for his friendship, hospitality (and all the
visits to Berliner Philharmonie!), encouragement, and advice. Indeed,
this project is in part supported by Piotr Faliszewski's ERC project
PRAGMA that would not have happened without Rolf's counselling and
guidance. Rolf also provided invaluable advice to Piotr Skowron as he
prepared his ERC project PRO-DEMOCRATIC.

Finally, the current paper also has a special place in connection to our
collaboration with Rolf and his group: Map of elections, which stems
from this project, was largely developed between Piotr Faliszewski's
and Rolf's
groups~\cite{szu-fal-sko-sli-tal:c:map,boe-bre-fal-nie-szu:c:compass,boe-fal-nie-szu-was:c:metrics}.

\hyphenation{Piotr}
\subsection*{Acknowledgments}
When preparing the conference version of this paper,
Piotr Faliszewski was supported by AGH University statutory research
grant 11.11.230.337,

Piotr Skowron was supported by the Foundation for Polish Science
within the Homing programme (Project title: ``Normative Comparison of
Multiwinner Election Rules''), and

Arkadii Slinko was supported by Marsden Fund grant 3706352 of The
Royal Society of New Zealand.

Stanis{\l}aw Szufa was supported by NCN project
2018/\-29/N/\-ST6/01303. 

Later, this project has received funding from the European Research
Council (ERC) under the European Union’s Horizon 2020 research and
innovation programme (grant agreement No 101002854).  Krzysztof Sornat
was also partially supported by the SNSF Grant 200021\_200731/1.  Part
of this work was done while Krzysztof was a postdoc at AGH University,
Poland (during which time he was supported by the above ERC project).

We are very grateful to the anonymous AAAI and JCSS reviewers for their valuable comments.\bigskip

\begin{center}
  \includegraphics[width=3cm]{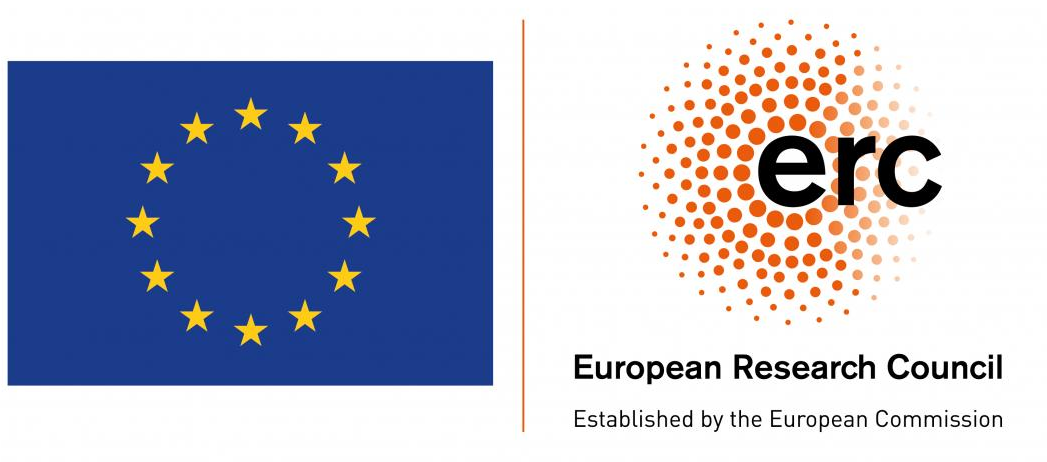}
\end{center}

\bibliographystyle{abbrv}
\bibliography{bib}

\end{document}